\documentclass[useAMS,usenatbib,referee]{biom}

\def\bSig\mathbf{\Sigma}

\usepackage[figuresright]{rotating}

\usepackage{color}
\usepackage{amsmath}
\usepackage{amssymb}
\usepackage{booktabs}
\usepackage{nameref}
\usepackage{bigfoot}

\newcommand{\CI}{\mathrel{\perp\!\!\!\perp}}

\newcommand{\Exp}{\mathbb{E}}
\newcommand{\expit}{\mathrm{expit}}
\newcommand{\logit}{\mathrm{logit}}
\newtheorem{appendixcorollary}{Corollary}



\setcounter{footnote}{2}

\title[Weighting for outcome and exposure misclassification and confounding]{A weighting method for simultaneous adjustment for confounding and joint exposure-outcome misclassifications}

\author{
Bas B.L. Penning de Vries\textsuperscript{1,*}, Maarten van Smeden\textsuperscript{1}, and Rolf H.H. Groenwold\textsuperscript{1,2}
\email{B.B.L.Penning\_de\_Vries@lumc.nl} \\
\textsuperscript{1}Department of Clinical Epidemiology, \textsuperscript{2}Department of Biomedical Data Sciences,\\
Leiden University Medical Center, PO Box 9600, 2300 RC, The Netherlands
}

\begin{document}




\makeatletter
\def\@evenhead{\Large\@ddell\hbox to
0pt{\small\thepage}\hspace{1.5em}\hfill}
 
\gdef\@journal{\vbox{\fontsize{10}{10}\selectfont\vbox{\hsize\textwidth{\sc }
}
}}
\def\ps@titlepage{\let\@mkboth\@gobbletwo
 \def\@oddhead{\footnotesize\@journal\hfill}
 \def\@oddfoot{\hfill{\reset@font\ 
 }\hfill}
 \def\@evenhead{\footnotesize\@journal\hfill}
 \def\sectionmark##1{}
 \def\subsectionmark##1{}}
\makeatother

\label{firstpage}

\begin{abstract}
Joint misclassification of exposure and outcome variables can lead to considerable bias in epidemiological studies of causal exposure-outcome effects. 
In this paper, we present a new maximum likelihood based estimator for the marginal causal odd-ratio that simulaneously adjusts for confounding and several forms of joint misclassification of the exposure and outcome variables.
The proposed method relies on validation data for the construction of weights that account for both sources of bias. The weighting estimator, which is an extension of the exposure misclassification weighting estimator proposed by \citeauthor{Gravel2018} (Statistics in Medicine, \citeyear{Gravel2018}), is applied to reinfarction data. 
Simulation studies were carried out 
to study its finite sample properties and compare it with methods that do not account for confounding or misclassification. The new estimator showed favourable large sample properties in the simulations. Further research is needed to study the sensitivity of the proposed method and that of alternatives to violations of their assumptions. The implementation of the estimator is facilitated by a new R function in an existing R package.\\
\end{abstract}

\begin{keywords}
Causal inference; confounding; propensity scores, inverse probability weighting; joint exposure and outcome misclassification; validation data.
\end{keywords}

\maketitle

\section{Introduction}
\label{s:introduction}

In epidemiological research on causal associations between a particular exposure and a certain outcome, erroneous information on either or both of these variables poses a serious methodological obstacle in making valid inferences. In particular, joint misclassification of exposure and outcome can lead to considerable bias of standard causal effect estimators, with direction and magnitude depending on various factors, including the misclassification mechanism and the direction and magnitude of the true effect \citep{Kristensen1992,Brenner1993,Vogel2005,Jurek2008,VanderWeele2012,Brooks2018}.

Exposure and outcome misclassification is typically categorised according to two separate properties: whether or not the misclassification is differential and whether or not it is dependent relative to some covariate vector $L$ containing patient characteristics \citep{Kristensen1992,VanderWeele2012}. Joint misclassification of exposure and outcome is said to be \emph{nondifferential} if (1) the sensitivity and specificity of exposure classification are constant across all categories of the (true) outcome given $L$ and (2) the sensitivity and specificity of outcome classification are constant across all categories of the (true) exposure given $L$; otherwise it is \emph{differential}. Misclassification is said to be \emph{independent} if the joint probability of any exposure and outcome classification given any true exposure and outcome categories and $L$ can be factored into the product of the corresponding probabilities for exposure and outcome separately; otherwise, it is \emph{dependent}. In \citeauthor{Dawid1979}'s notation (\citeyear{Dawid1979}), that is, if true exposure level $A$ and true outcome $Y$ are (potentially mis)classified as $B$ and $Z$, respectively, misclassification is nondifferential if and only if $B\CI Y|A,L$ and $Z\CI A|Y,L$ and independent if and only if $Z\CI B|Y,A,L$. 

Epidemiological research hampered by joint misclassification of some type is likely voluminous \citep{Brooks2018}. Examples of studies affected by exposure and outcome misclassification can be found, for example, in the literature on the causal effects of drug use, which is largely based on routinely collected data, where exposures are typically operationalised on the basis of prescription records and where outcomes are often self-reported \citep{Marcum2013,Culver2012,Leong2013,Ni2017}. Departures from differentiality are likely, particularly when it concerns prescription-only medication.
In applied epidemiological research, misclassification or some of its potential consequences are often ignored 
\citep{Jurek2006,Brakenhoff2018}. The assertion often made in the discussion of study results that observed measures of association are biased toward the null under nondifferentiality, for example, is not generally true unless additional conditions are presupposed \citep{Brenner1993,Brooks2018}.

Methods to adjust for misclassification rely on additional information that can be used to estimate or correct for bias. One potential source of information is validation data obtained through supposedly infallible measurement. Recently, \cite{Gravel2018} proposed an inverse probability weighting (IPW) method to simultaneously address confounding and outcome misclassification by means of internal validation data. In what follows, we propose an extension of this method to allow for confounding adjustment and joint exposure and outcome misclassification. This flexible estimator allows for the misclassifications to be dependent, differential or both. In Section~\ref{s:development}, inverse probability weights for confounding and joint misclassification are introduced through a hypothetical study based on the illustrative example of \citeauthor{Gravel2018} (\citeyear{Gravel2018}). Section~\ref{s:estimation} details methods for estimation of the various components of the proposed weights. In Section~\ref{s:simulation}, we describe a series of Monte Carlo simulations that were used to study properties of the proposed method in finite samples. We conclude with a summary and discussion of our findings in context of the existing literature.

\section{Data distribution for illustration and development of weighting method}\label{s:development}

We first consider the data and setting described by \cite{Gravel2018} and suppose that Table~\ref{t:table1} represents a simple random (i.i.d.) sample from (or that its cell counts are proportional to the respective densities in) the population of interest. This illustration is based on a cohort study on the association between post-myocardial infarction statin use ($A$) and the 1-year risk of reinfarction ($Y$). In what follows, we will refer to this example as the `reinfarction example'.

Throughout we take the counterfactual framework for causal inference, formal accounts of which are given for example by \cite{Neyman1935}, \cite{Rubin1974}, \cite{Holland1986}, \cite{Holland1988} and \cite{Pearl2009}.
The interest, we suppose, lies in estimating a function of counterfactuals, in particular the causal marginal odds ratio OR,
\begin{align}
\text{OR} &= \frac{\Exp[Y(1)]/(1-\Exp[Y(1)])}{\Exp[Y(0)]/(1-\Exp[Y(0)])}, \label{e:or1}
\end{align}
where $Y(0),Y(1)$ denote the counterfactual outcomes for hypothetical interventions setting $A$ to $0$ and $1$, respectively.

\begin{table}
\centering
\caption{Cross-classification of the reinfarction data for 33,007 individuals as given by Gravel and Platt (2018).\break
}\label{t:table1}
\begin{tabular*}{\textwidth}{
@{\extracolsep{\fill}}lrrlrr@{\extracolsep{\fill}}
}
\toprule
&\multicolumn{2}{c}{$L=0$}&&\multicolumn{2}{c}{$L=1$}\\\cline{2-3}\cline{5-6}
&$A=0$&$A=1$&&$A=0$&$A=1$\\\midrule
$Y=0$&$11602$&$13116$&&$1302$&$5363$\\
$Y=1$&$890$&$589$&&$49$&$96$\\
\bottomrule
\end{tabular*}
\end{table}

\subsection{No misclassification}

Under conditional exchangeability given $L$ (i.e., $(Y(0),Y(1))\CI A|L$), consistency ($Y(a)=Y$ if $A=a$) and positivity ($\Pr(A=a|L=l)>0$ for $a=0,1$ and all $l$ in the support of $L$), the odds ratio of \eqref{e:or1} can be expressed in terms of `observables' (meaning, here, variables that would be observed had there been no measurement error) as follows:
\begin{align}
\text{OR} &= \frac{\Exp[WY|A=1]/(1-\Exp[WY|A=1])}{\Exp[WY|A=0]/(1-\Exp[WY|A=0])} \label{e:or2}
\end{align}
with weights $W$ defined as the inverse probability of the allocated exposure level $A$ given $L$ (i.e., the inverse propensity score) multiplied by the prevalence of the allocated exposure level $A$ (i.e., $W=\Pr(A)/\Pr(A|L)$; Appendix~\nameref{s:app_weighting}).

Replacing components of the right-hand side of \eqref{e:or2} with sample analogues, we obtain the following estimator for the setting where $L$ is binary:
\begin{align}
\widehat{\text{OR}} &:= \frac{\widehat{\Exp}[\widehat{W}Y|A=1]/(1-\widehat{\Exp}[\widehat{W}Y|A=1])}{\widehat{\Exp}[\widehat{W}Y|A=0]/(1-\widehat{\Exp}[\widehat{W}Y|A=0])} \nonumber\\
&= \frac{(\widehat{W}_{10}n_{110}+\widehat{W}_{11}n_{111})/(n_{110}+n_{111}+n_{010}+n_{011}-\widehat{W}_{10}n_{110}-\widehat{W}_{11}n_{111})}{(\widehat{W}_{00}n_{100}+\widehat{W}_{01}n_{101})/(n_{100}+n_{101}+n_{000}+n_{001}-\widehat{W}_{00}n_{100}-\widehat{W}_{01}n_{101})}, 
\end{align}
where $n_{yal}$ denotes the number of subjects with $Y=y$, $A=a$, $L=l$ and where $\widehat{W}_{al}$ is the product of the proportion of subjects in the sample with $A=a$ and the inverse of the proportion of subjects with $A=a$ among those with $L=l$. For the data in Table~\ref{t:table1}, we obtain $\widehat{\text{OR}}\approx 0.573$. The corresponding crude odds ratio (i.e., with $\widehat{W}=1$) is $0.509$.

\subsection{Joint misclassification}

Suppose that rather than observing $Y$ and $A$ we observe $Z$ and $B$, the 
misclassified versions of $Y$ and $A$, respectively.
The relation between $Z$ and $B$ on the one hand and $Y$, $A$ and $L$ on the other can be expressed as follows:
\begin{align*}
\Pr(Z=z,B=b|Y=y,A=a,L=l) 
&= (\pi_{byal})^{z}(1-\pi_{byal})^{1-z}(\lambda_{yal})^{b}(1-\lambda_{yal})^{1-b}
\end{align*}
for $z,b\in\{0,1\}$ and all possible realisations $y,a,l$ of $Y,A,L$, and where $\pi_{byal}=\Pr(Z=1|B=b,Y=y,A=a,L=l)$ and $\lambda_{yal}=\Pr(B=1|Y=y,A=a,L=l)$.

To simulate (dependent differential) misclassification in the reinfarction dataset, we use the 
true positive and false positive rates given in Table~\ref{t:table2}. The expected cell counts for these rates are given in Tables~\ref{t:table3}~and~\ref{t:table4}.

\begin{table}
\centering
\caption{True and false positive rates for reinfarction example. For $b,y,a,l\in\{0,1\}$, $\lambda_{yal}=\Pr(B=1|Y=y,A=a,L=l)$ and $\pi_{byal}=\Pr(Z=z|B=b,Y=y,A=a,L=l)$. \break
}\label{t:table2}
\begin{tabular*}{\textwidth}{
@{\extracolsep{\fill}}ccc@{\extracolsep{\fill}}
}
\toprule
$\pi_{0000}=0.050$&$\pi_{0001}=0.020$&$\lambda_{000}=0.010$\\
$\pi_{1000}=0.060$&$\pi_{1001}=0.108$&$\lambda_{100}=0.181$\\
$\pi_{0100}=0.930$&$\pi_{0101}=0.806$&$\lambda_{010}=0.880$\\
$\pi_{1100}=0.938$&$\pi_{1101}=0.692$&$\lambda_{110}=0.910$\\
$\pi_{0010}=0.030$&$\pi_{0011}=0.109$&$\lambda_{001}=0.100$\\ 
$\pi_{1010}=0.060$&$\pi_{1011}=0.050$&$\lambda_{101}=0.265$\\
$\pi_{0110}=0.906$&$\pi_{0111}=0.765$&$\lambda_{011}=0.930$\\
$\pi_{1110}=0.950$&$\pi_{1111}=0.861$&$\lambda_{111}=0.823$\\
\bottomrule
\end{tabular*}
\end{table}

\begin{table}
\centering
\caption{Expected cell counts (rounded to integers) for reinfarction example after misclassification was introduced. Because of rounding, the sum of all cell entries is 33,006 rather than 33,007, the size of the reinfarction dataset.\break
}\label{t:table3}
\begin{tabular*}{\textwidth}{
@{\extracolsep{\fill}}lrrlrr@{\extracolsep{\fill}}
}
\toprule
&\multicolumn{2}{c}{$Z=0$}&&\multicolumn{2}{c}{$Z=1$}\\\cline{2-3}\cline{5-6}
&$B=0$&$B=1$&&$B=0$&$B=1$\\\midrule
$Y=0,~A=0,~L=0$&$10912$&$109$&&$574$&$7$ \\
$Y=1,~A=0,~L=0$&$51$&$10$&&$678$&$151$ \\
$Y=0,~A=1,~L=0$&$1527$&$10850$&&$47$&$693$ \\
$Y=1,~A=1,~L=0$&$5$&$27$&&$48$&$509$ \\
$Y=0,~A=0,~L=1$&$1148$&$116$&&$23$&$14$ \\
$Y=1,~A=0,~L=1$&$7$&$4$&&$29$&$9$ \\
$Y=0,~A=1,~L=1$&$334$&$4738$&&$41$&$249$ \\
$Y=1,~A=1,~L=1$&$4$&$11$&&$13$&$68$ \\
\bottomrule
\end{tabular*}
\end{table}

\begin{table}
\centering
\caption{Expected cell counts (rounded to integers) for illustrative study setting after misclassification, collapsed over $Y$ and $A$.\break
}\label{t:table4}
\begin{tabular*}{\textwidth}{
@{\extracolsep{\fill}}lrrlrr@{\extracolsep{\fill}}
}
\toprule
&\multicolumn{2}{c}{$L=0$}&&\multicolumn{2}{c}{$L=1$}\\\cline{2-3}\cline{5-6}
&$B=0$&$B=1$&&$B=0$&$B=1$\\\midrule
$Z=0$&$12495$&$10996$&&$1347$&$1360$\\
$Z=1$&$1493$&$4869$&&$106$&$340$\\
\bottomrule
\end{tabular*}
\end{table}

We redefine the weights in \eqref{e:or2} as a function of $B$ and $L$ (according to Appendix~\nameref{s:app_weighting}) such that 
\bgroup
\allowdisplaybreaks
\begin{align}
W &= \frac{p(B)\varepsilon_{BL}}
{\sum_{y}\sum_{a} \pi_{ByaL}(\lambda_{yaL})^{B}(1-\lambda_{yaL})^{1-B}(\varepsilon_{aL})^{y}(1-\varepsilon_{aL})^{1-{y}}(\delta_L)^{a}(1-\delta_L)^{1-a}},
\label{e:weights}
\end{align}
where $p(B)$ is the prevalence of level $B$ of the potentially misclassified version of the exposure variable and where $\varepsilon_{al}=\Pr(Y=1|A=a,L=l)$ and $\delta_l=\Pr(A=1|L=l)$ for all possible realisations $a$ and $l$ of $A$ and $L$, respectively. In Appendix~\nameref{s:app_weighting}, it is shown that
\begin{align}
\text{OR} &= \frac{\Exp[WZ|B=1]/(1-\Exp[WZ|B=1])}{\Exp[WZ|B=0]/(1-\Exp[WZ|B=0])}, \label{e:or3}
\end{align}
which suggests the plug-in estimator
\begin{align}
\widehat{\text{OR}} &:= \frac{\widehat{\Exp}[\widehat{W}Z|B=1]/(1-\widehat{\Exp}[\widehat{W}Z|B=1])}{\widehat{\Exp}[\widehat{W}Z|B=0]/(1-\widehat{\Exp}[\widehat{W}Z|B=0])},\label{e:or4}
\end{align}
where $\widehat{\Exp}$ denotes the sample mean operator and $\widehat{W}$ the sample analogue (i.e., consistent estimator) of $W$ in \eqref{e:weights}.

In the absence of exposure misclassification, \eqref{e:weights} reduces to
\begin{align}W=\Bigg(\frac{(\delta_{L})^A(1-\delta_L)^{1-A}}{p(A)}\Bigg[\pi_{A0AL}\frac{1-\varepsilon_{AL}}{\varepsilon_{AL}}+\pi_{A1AL}\Bigg]\Bigg)^{-1}.\label{e:weights_GP}\end{align}
The first term within the round brackets corrects for confounding and represents the propensity score divided by prevalence of exposure level $A$. The term within square brackets is a factor that corrects for misclassification in the outcome variable. This correction factor is similar to that proposed by \cite{Gravel2018}. The only difference is that where in \eqref{e:weights_GP} it does not depend on the fallible measurement $Z$ of $Y$, \citeauthor{Gravel2018} define different weights for subjects with $Z=0$. Note, however, that the choice of weights for subjects with $Z=0$ does not affect the population quantity in \eqref{e:or3} or the estimator defined by \eqref{e:or4}.

As for the reinfarction example, the odds ratio estimate for the exposure-outcome effect based on inverse probability weighting that assumes absence of exposure or outcome misclassification is $1.120$, while the corresponding misclassification naive crude odds ratio is $1.031$. Estimation of the population weights $W$ 
from observables using validation data is discussed in the next section. As shown below, weighting using the proposed weights that account for confounding and outcome and exposure misclassification results in an odds ratio of $\text{OR}=\widehat{\text{OR}}\approx0.573$. Inference based on \eqref{e:weights_GP} rather than \eqref{e:weights}, i.e., ignoring misclassification in the exposure but correcting for outcome misclassification, yields an odds ratio estimate of 0.934.

\subsection{Parameterisation based on positive and negative predictive values}
In the foregoing discussion, the proposed weights were expressed in terms of sensitivity and specificity parameters. The sensitivity and specificity of $Z$ with respect to $Y$, given $(B,A,L)$, are $\pi_{{B}1AL}$ and $1-\pi_{{B}0AL}$, respectively. Similarly, $\lambda_{Y1L}$ and $1-\lambda_{Y0L}$ reflect the sensitivity and specificity, respectively, with respect to $A$, conditional on $Y$ and $L$.

As discussed below, it may be more convenient to choose a parameterisation that is based on (positive and negative) predictive values. Define $\delta^\ast_l=\Pr(B=1|L=l)$, $\varepsilon^\ast_{bl}=\Pr(Z=1|B=b,L=l)$, $\lambda^\ast_{zb l}=\Pr(A=1|Z=z,B=b,L=l)$ and $\pi^\ast_{azbl}=\Pr(Y=1|A=a,Z=z,B=b,L=l)$. The weights in \eqref{e:weights} can be rewritten as
\begin{align}  
W &= \frac{
\sum_{y}\sum_{a} \pi^\ast_{B yaL}(\lambda^\ast_{yaL})^{B}(1-\lambda^\ast_{yaL})^{1-B}(\varepsilon^\ast_{aL})^{y}(1-\varepsilon^\ast_{aL})^{1-{y}}(\delta^\ast_L)^{a}(1-\delta^\ast_L)^{1-a}
}
{
\sum_{y}\sum_{a} (\lambda^\ast_{yaL})^{B}(1-\lambda^\ast_{yaL})^{1-B}(\varepsilon^\ast_{aL})^{y}(1-\varepsilon^\ast_{aL})^{1-{y}}(\delta^\ast_L)^{a}(1-\delta^\ast_L)^{1-a}
}
\nonumber\\&\qquad\times\frac{p(B)}{\varepsilon^\ast_{B L}(\delta^\ast_L)^{B}(1-\delta^\ast_L)^{1-B}}.
\label{e:weights2}
\end{align}
In the absence of exposure misclassification, these weights simplify to {\begin{align}
W=\frac{p(A)}{(\delta_{L})^A(1-\delta_L)^{1-A}}\frac{\varepsilon_{AL}}{\varepsilon^\ast_{AL}} .\nonumber
\end{align}

\section{Estimation of weights}\label{s:estimation}
Estimation of the proposed weights can be done using a number of approaches and we will here consider a maximum likelihood approach that assumes the availability of internal validation data, i.e., that some study participants have their observed exposure or outcome measured by an `infallible' or `gold standard' (100\% accurate) classifier.

\subsection{Validation subset inclusion mechanism}
Let $R_{Y}$ be the indicator variable that takes the value of 1 if the outcome is observed (i.e., measured by an infallible classifier) and 0 otherwise. Similarly, define $R_{A}$ to be the indicator variable that takes the value of 1 if the exposure variable is observed and 0 otherwise. 
$R_{Y}$ and $R_{A}$ reflect which subjects have validation data available on $Y$ and $A$, respectively. The subset of subjects with validation data on $Y$ need not fully overlap with the subset with validation data on $A$.

The validation subsets can be approached from the missing data framework of \cite{Rubin1976}. Provided that $Z,B,L$ are free of missing values, \citeauthor{Rubin1976}'s missing at random (MAR) condition is met whenever the vector $(R_{Y},R_{A})$ is conditionally independent of $(Y,A)$ given $(Z,B,L)$.

\subsection{Full likelihood approach based on parameterisation in terms of sensitivities and specificities}
Simultaneous estimation of the whole vector of $\delta$, $\varepsilon$, $\lambda$ and $\pi$ parameters can be done via maximum likelihood estimation as follows. Assuming i.i.d. observations $(Z_i,B_i,Y_i,A_i,L_i)$ and ignorable missingness in the sense of \cite{Rubin1976} (MAR and distinctness), for valid likelihood-based inference it is appropriate to maximise the following log-likelihood over the parameter space of $\theta$, the vector of $\delta$, $\varepsilon$, $\lambda$ and $\pi$ parameters:
\bgroup\allowdisplaybreaks
\begin{align}
\ell(\theta)
&= \sum_{i:R_{Yi}=R_{Ai}=1}\log f(\theta;Z_i,B_i,Y_i,A_i,L_i) \nonumber\\
&\qquad+ \sum_{i:R_{Yi}=1\wedge R_{Ai}=0}\log \sum_{A_i} f(\theta;Z_i,B_i,Y_i,A_i,L_i)\nonumber\\
&\qquad+ \sum_{i:R_{Yi}=0\wedge R_{Ai}=1}\log \sum_{Y_i} f(\theta;Z_i,B_i,Y_i,A_i,L_i)\nonumber\\
&\qquad+ \sum_{i:R_{Yi}=R_{Ai}=0}\log \sum_{Y_i}\sum_{A_i} f(\theta;Z_i,B_i,Y_i,A_i,L_i),\nonumber
\end{align}\egroup
where
\begin{align}
f(\theta;Z_i,B_i,Y_i,A_i,L_i)&=(\pi_{B_iY_iA_iL_i})^{Z_i}(1-\pi_{B_iY_iA_iL_i})^{1-Z_i}(\lambda_{Y_iA_iL_i})^{B_i}(1-\lambda_{Y_iA_iL_i})^{1-B_i}
\nonumber\\&\qquad\times(\varepsilon_{A_iL_i})^{Y_i}(1-\varepsilon_{A_iL_i})^{1-{Y_i}}(\delta_{L_i})^{A_i}(1-\delta_{L_i})^{1-A_i}.\nonumber
\end{align}
Evaluating this log-likelihood involves marginalising over unobserved quantities in the last three terms of $\ell(\theta)$. The log-likelihood equations may become considerably more tractable if we choose a parameterisation of the likelihood that is based on predictive values rather than sensitivities and specificities.


\subsection{Full likelihood approach based on parameterisation in terms of predictive values}

Inference may alternatively be based on a log-likelihood that is parameterised in terms of the vector $\theta^\ast$ of the $\delta^\ast$, $\varepsilon^\ast$, $\lambda^\ast$ and $\pi^\ast$ parameters, i.e.,
\begin{align*}
\ell^\ast(\theta^\ast) 
&= \sum_{i:R_{Yi}=R_{Ai}=1}\log g(\theta^\ast;Z_i,B_i,Y_i,A_i,L_i) \nonumber\\
&\qquad+ \sum_{i:R_{Yi}=1\wedge R_{Ai}=0}\log \sum_{A_i} g(\theta^\ast;Z_i,B_i,Y_i,A_i,L_i)\nonumber\\
&\qquad+ \sum_{i:R_{Yi}=0\wedge R_{Ai}=1}\log \sum_{Y_i} g(\theta^\ast;Z_i,B_i,Y_i,A_i,L_i)\nonumber\\
&\qquad+ \sum_{i:R_{Yi}=R_{Ai}=0}\log \sum_{Y_i}\sum_{A_i} g(\theta^\ast;Z_i,B_i,Y_i,A_i,L_i),\nonumber
\end{align*}\egroup where
\begin{align}
g(\theta^\ast;Z_i,B_i,Y_i,A_i,L_i)
&=(\pi^\ast_{A_iZ_iB_iL_i})^{Y_i}(1-\pi^{\ast}_{A_iZ_iB_iL_i})^{1-Y_i}(\lambda^\ast_{Z_iB_iL_i})^{A_i}(1-\lambda^\ast_{Z_iB_iL_i})^{1-A_i}
\nonumber\\&\qquad\times(\varepsilon^\ast_{B_iL_i})^{Z_i}(1-\varepsilon^\ast_{B_iL_i})^{1-{Z_i}}(\delta^\ast_{L_i})^{B_i}(1-\delta^\ast_{L_i})^{1-B_i}.\nonumber
\end{align}
If validation data is available on $Y$ if and only if it is available on $A$, the complete data log-likelihood ignoring the missing data mechanism can be conveniently expressed as follows:
\begin{align}
\ell^\ast(\theta^\ast) &= \ell_1^\ast(\theta^\ast)+\ell_2^\ast(\theta^\ast)+\ell_3^\ast(\theta^\ast)+\ell_4^\ast(\theta^\ast), \label{e:loglikpredval}
\end{align}
with $\theta^\ast$ denoting the vector of $\delta^\ast$, $\varepsilon^\ast$, $\lambda^\ast$ and $\pi^\ast$ parameters and where
\begin{align*}
\ell_1^\ast(\theta^\ast)&= \sum_{i:R_{Yi}=R_{Ai}=1} Y_i\log(\pi^\ast_{A_iZ_iB_iL_i})+(1-Y_i)\log(1-\pi^\ast_{A_iZ_iB_iL_i})\\
\ell_2^\ast(\theta^\ast)&= \sum_{i:R_{Yi}=R_{Ai}=1} A_i\log(\lambda^\ast_{Z_iB_iL_i})+(1-A_i)\log(1-\lambda^\ast_{Z_iB_iL_i})\\
\ell_3^\ast(\theta^\ast)&= \sum_{i} Z_i\log(\varepsilon^\ast_{B_iL_i})+(1-Z_i)\log(1-\varepsilon^\ast_{B_iL_i})\\
\ell_4^\ast(\theta^\ast)&= \sum_{i} B_i\log(\delta^\ast_{L_i})+(1-B_i)\log(1-\delta^\ast_{L_i}).
\end{align*}
Now, assuming distinct parameter spaces for the vectors of $\pi^\ast$, $\lambda^\ast$, $\varepsilon^\ast$, and $\delta^\ast$ parameters, the parameter values that maximise $\ell^\ast(\theta^\ast)$ can be found by separately maximising $\ell_1^\ast(\theta^\ast)$ and $\ell_2^\ast(\theta^\ast)$ in the validation subset with respect to the $\pi^\ast$ and $\lambda^\ast$ parameters, respectively, and $\ell_3^\ast(\theta^\ast)$ and $\ell_4^\ast(\theta^\ast)$ in the entire dataset with respect to $\varepsilon^\ast$ and $\delta^\ast$. Following \cite{Gravel2018} and \cite{Tang2013}, the sum of the first and last two terms are therefore suitably labelled the internal validation and main study log-likelihood, respectively. With this parameterisation, finding the maximum likelihood estimates is readily achieved by taking advantage of standard statistical software.

\subsection{Equivalence of likelihood approaches based on different parameterisations}

Without restrictions imposed on \begin{align*}\theta_l&:=(\pi_{000l},\pi_{100l},\pi_{010l},\pi_{110l},\pi_{001l},\pi_{101l},\pi_{011l},\pi_{111l},\lambda_{00l},\lambda_{10l},\lambda_{01l},\lambda_{11l},\varepsilon_{0l},\varepsilon_{1l},\delta_{l})\text{~~and}\\\theta^\ast_l&:=(\pi^\ast_{000l},\pi^\ast_{100l},\pi^\ast_{010l},\pi^\ast_{110l},\pi^\ast_{001l},\pi^\ast_{101l},\pi^\ast_{011l},\pi^\ast_{111l},\lambda^\ast_{00l},\lambda^\ast_{10l},\lambda^\ast_{01l},\lambda^\ast_{11l},\varepsilon^\ast_{0l},\varepsilon^\ast_{1l},\delta^\ast_{l}),\end{align*}
other than that $\theta_l,\theta_l^\ast\in(0,1)^{15}$, it can be shown that the maximum likelihood estimator based on the internal validation design is invariant to its parameterisation (sensitivities/specificities versus positive and negative predictive values). This is because there exists a function mapping every $\theta_l\in(0,1)^{15}$ to a unique $\theta^\ast_l\in(0,1)^{15}$ and vice versa.
Maximising $\ell(\theta)$ with respect to $\theta$ is then equivalent 
to maximising $\ell(\sigma(\theta^\ast))$ ($=\ell^\ast(\theta^\ast)$) with respect to $\theta^\ast$ for some bijection $\sigma$ such that $\theta=\sigma(\theta^\ast)$; that is,
\begin{align*}
\underset{\theta}{\arg\max}~\ell(\theta) = \sigma\left(\underset{\theta^\ast}{\arg\max}~\ell(\sigma(\theta^\ast))\right).
\end{align*}
If more restrictions are imposed on $\theta$ or $\theta$, e.g., if we assume non-saturated logistic models for the components of $\theta$ and $\theta^\ast$, this equivalence no longer holds and the resulting weight estimates may differ depending on the parameterisation.

\subsection{Application}\label{s:estimation_application}

For the re-infarction data example, we assume validation data are available according to a MAR mechanism characterised by
\begin{align*}
\Pr(R_{Y}=1|R_{A}=s,Z=z,B=b,Y=y,A=a,L=l) &= s,\\
\Pr(R_{A}=1|Z=z,B=b,Y=y,A=a,L=l) &= 0.25+0.10b.
\end{align*}
This mechanism assigns validation data to an individual on either both $Y$ and $A$ (30\% of all individuals) or neither depending on their realisation of $B$, the misclassified version of the exposure variable $A$ (Table~\ref{t:table5}). Tables~S.1~and~S.2 (see Supplementary Web Appendix) give the likelihood contributions for the parameterisation based on predictive values and the closed form maximum likelihood expressions, respectively. Maximum likelihood estimates can also be found by fitting to the data the saturated logistic regression models of $B$ and $Z$ on $L$ and $(B,L)$, respectively, and to the validation subset the fully saturated logistic regression models of $A$ and $Y$ on $(Z,B,L)$ and $(A,Z,B,L)$, respectively. Estimated weights are then obtained by plugging in the maximum likelihood estimates into \eqref{e:weights2}. As in the complete data setting where we assumed the weights to be known, evaluating \eqref{e:or4} then yields an odds ratio of $\widehat{\text{OR}}=\text{OR}\approx0.573$.

\begin{table}
\centering
\caption{Expected cell counts (rounded to integers) for illustrative study setting after misclassification and formation of validation subsets.\break
}\label{t:table5}
\begin{tabular*}{\textwidth}{
@{\extracolsep{\fill}}lllllrrlrr@{\extracolsep{\fill}}
}
\toprule
&&&&&\multicolumn{2}{c}{$B=0$}&&\multicolumn{2}{c}{$B=1$}
\\\cline{6-7}\cline{9-10}
$R_Y$&$R_A$&$Y$&$A$&$L$&$Z=0$&$Z=1$&&$Z=0$&$Z=1$\\\midrule
$0$&$0$&$ $&$ $&$0$&$m_{1}=9371$&$m_{2}=7147$&&$m_{3}=1011$&$m_{4}=\phantom{0}884$\\
$0$&$0$&$ $&$ $&$1$&$m_{5}=1120$&$m_{6}=3165$&&$m_{7}=\phantom{0}\phantom{0}80$&$m_{8}=\phantom{0}221$\\
$0$&$1$&$ $&$ $&$ $&$m_{9}=\phantom{0}\phantom{0}\phantom{0}0$&$m_{10}=\phantom{0}\phantom{0}\phantom{0}0$&&$m_{11}=\phantom{0}\phantom{0}\phantom{0}0$&$m_{12}=\phantom{0}\phantom{0}\phantom{0}0$\\
$1$&$0$&$ $&$ $&$ $&$m_{13}=\phantom{0}\phantom{0}\phantom{0}0$&$m_{14}=\phantom{0}\phantom{0}\phantom{0}0$&&$m_{15}=\phantom{0}\phantom{0}\phantom{0}0$&$m_{16}=\phantom{0}\phantom{0}\phantom{0}0$\\
$1$&$1$&$0$&$0$&$0$&$m_{17}=2728$&$m_{18}=\phantom{0}\phantom{0}38$&&$m_{19}=\phantom{0}144$&$m_{20}=\phantom{0}\phantom{0}\phantom{0}2$\\
$1$&$1$&$1$&$0$&$0$&$m_{21}=\phantom{0}\phantom{0}13$&$m_{22}=\phantom{0}\phantom{0}\phantom{0}3$&&$m_{23}=\phantom{0}169$&$m_{24}=\phantom{0}\phantom{0}53$\\
$1$&$1$&$0$&$1$&$0$&$m_{25}=\phantom{0}382$&$m_{26}=3797$&&$m_{27}=\phantom{0}\phantom{0}12$&$m_{28}=\phantom{0}242$\\
$1$&$1$&$1$&$1$&$0$&$m_{29}=\phantom{0}\phantom{0}\phantom{0}1$&$m_{30}=\phantom{0}\phantom{0}\phantom{0}9$&&$m_{31}=\phantom{0}\phantom{0}12$&$m_{32}=\phantom{0}178$\\
$1$&$1$&$0$&$0$&$1$&$m_{33}=\phantom{0}287$&$m_{34}=\phantom{0}\phantom{0}41$&&$m_{35}=\phantom{0}\phantom{0}\phantom{0}6$&$m_{36}=\phantom{0}\phantom{0}\phantom{0}5$\\
$1$&$1$&$1$&$0$&$1$&$m_{37}=\phantom{0}\phantom{0}\phantom{0}2$&$m_{38}=\phantom{0}\phantom{0}\phantom{0}1$&&$m_{39}=\phantom{0}\phantom{0}\phantom{0}7$&$m_{40}=\phantom{0}\phantom{0}\phantom{0}3$\\
$1$&$1$&$0$&$1$&$1$&$m_{41}=\phantom{0}\phantom{0}84$&$m_{42}=1658$&&$m_{43}=\phantom{0}\phantom{0}10$&$m_{44}=\phantom{0}\phantom{0}87$\\
$1$&$1$&$1$&$1$&$1$&$m_{45}=\phantom{0}\phantom{0}\phantom{0}1$&$m_{46}=\phantom{0}\phantom{0}\phantom{0}4$&&$m_{47}=\phantom{0}\phantom{0}\phantom{0}3$&$m_{48}=\phantom{0}\phantom{0}24$\\
\bottomrule
\end{tabular*}
\end{table}

\section{Simulations}\label{s:simulation}
We performed a series of Monte Carlo simulation experiments to illustrate the implementation of the proposed method, to study its finite sample properties and to compare the method to estimators that ignore the presence of confounding or joint exposure and outcome misclassification. All simulations were conducted using R-3.5.0 \citep{R2018} on x86\_64-pc-linux-gnu platforms of the high performance computer cluster of Leiden University Medical Center.

\subsection{Methods}
For all 36 simulation experiments, we generated $n_{\mathrm{sim}}=1000$ samples of size $n$ according to the data generating mechanisms depicted in the directed acyclic graphs of Figure~\ref{f:DAG}. This multi-step data generating process included generating values on measurement error-free variables, introducing misclassification and allocating individuals validation data. We applied various estimators to each of the simulation samples to yield, for each scenario, an empirical distribution of each point estimator and corresponding precision estimators. These distributions were then summarised into various performance metrics. These metrics include the empirical bias of the estimator on the log-scale (i.e., the mean estimated log-OR minus the target log-OR across the $n_{\mathrm{sim}}$ samples), the empirical standard error (SE) of the estimator on the log-scale (i.e., the square root of the mean squared deviation of the estimated log-OR from the mean log-OR), the empirical mean squared error (MSE) (i.e., the sum of the squared SE and the squared bias), the square root of the mean estimated variance (SSE, sample standard error) and the empirical coverage probability (CP) (i.e., the fraction of simulation runs per scenario where the 95\% confidence interval (95\%CI) contained the target quantity).

\subsubsection{Distribution of measurement error-free variables}

Following \cite{Gravel2018}, we consider a setting based on that of ``Scenario A'' in \cite{Setoguchi2008} with slight modifications to the propensity score and outcome models. We consider a fully observed covariate vector $L=(L_0,...,L_{10})$ whose distribution coincides with that of $h(V)$, where $V=(V_1,...,V_{10})$ has the multivariate normal distribution with zero means, unit variances and correlations equal to zero except for the correlations between $W_1$ and $V_5$, $V_2$ and $V_6$, $V_3$ and $V_8$, and $V_4$ and $V_9$, which were set to $0.2$, $0.9$, $0.2$, and $0.9$, respectively. Function $h$ was defined such that $$h(V)=(I(V_1>0),V_2,I(V_3>0),V_4,I(V_5>0),I(V_6>0),V_7,I(V_8>0),I(V_9>0),V_{10}).$$ Thus, sampling from the distribution of $L$ is equivalent to sampling from the multivariate normal distribution with the given parameter values and dichotomising the 1st, 3rd, 5th, 6th, 8th and 9th elements. 

Next, let $U_1$ and $U_2$ be binary variables distributed according to the following logistic models:
\begin{align}
\logit \Pr(U_1=1|L) &= \eta_0, \label{e:B}\\
\logit \Pr(U_2=1|L,U_1) &= \mu_0. \label{e:Z}
\end{align}
The distribution of the binary exposure variable $A$ was defined according to the model \begin{align}
\logit \Pr(A=1|L,U_1,U_2) = \alpha_0+\textstyle\sum_{j=1}^{10}\alpha_jL_j+\alpha_{11}U_1. \label{e:A}
\end{align}
Letting $U_3$ be a scalar random variable that is independent of $(A,L_1,...,L_{10},U_1,U_2)$ and uniformly distributed over the interval $[0,1]$, we defined the counterfactual outcome $Y(a)$, under the intervention setting $A$ to $a$, as
\begin{align}
Y(a) = I\Big(U_3<\expit\Big\{\beta_0+\gamma a+\textstyle\sum_{j=1}^{10}\beta_jL_j+\beta_{11}U_2\Big\}\Big). \label{e:Y}
\end{align}
With $Y:=Y(A)$, the above implies consistency, conditional exchangeability given $L$ and structural positivity.

\subsubsection{Misclassification mechanism}

For scenarios with joint misclassification, we defined $B=U_1$ and $Z=U_2$, so that the predictive values take a standard logistic form:
\begin{align}
\logit \Pr(Y=1|A,B,L,Z) &= \beta_0+\gamma A+\textstyle\sum_{j=1}^{10}\beta_jL_j+\beta_{11}Z
\label{e:pv1}
\\
\logit \Pr(A=1|B,L,Z) &= \alpha_0+\textstyle\sum_{j=1}^{10}\alpha_jL_j+\alpha_{11}B. \label{e:pv2}
\end{align}
For scenarios without exposure misclassification, we set $\alpha_{11}=0$ and defined $B=A$ and $Z=U_2$, so that
\begin{align}
\logit \Pr(Y=1|A,B,L,Z) &= \beta_0+\gamma A+\textstyle\sum_{j=1}^{10}\beta_jL_j+\beta_{11}Z
\label{e:pv3}
\\
\logit \Pr(B=1|L,Z) &= \alpha_0+\textstyle\sum_{j=1}^{10}\alpha_jL_j. \label{pv4}
\end{align}

For simplicity, we removed any marginal dependence of $Z$ on the covariates $L$ and $U_1$ as well as any marginal dependence of $U_1$ on $L$ (cf. equations \eqref{e:B} and \eqref{e:Z}). 
Although models \eqref{e:B} through \eqref{e:pv2} take a standard logistic form, they do not imply that the corresponding sensitivities and specificities can be written in the same form. We chose the predictive values rather than the sensitivities and specificities to take a standard logistic form so as to ensure correct model specification in the estimation of the weights in the simulation experiments, in which a likelihood approach based on predictive values was adopted (cf. \eqref{e:loglikpredval}).

\begin{figure}\centering
\includegraphics[scale=.6]{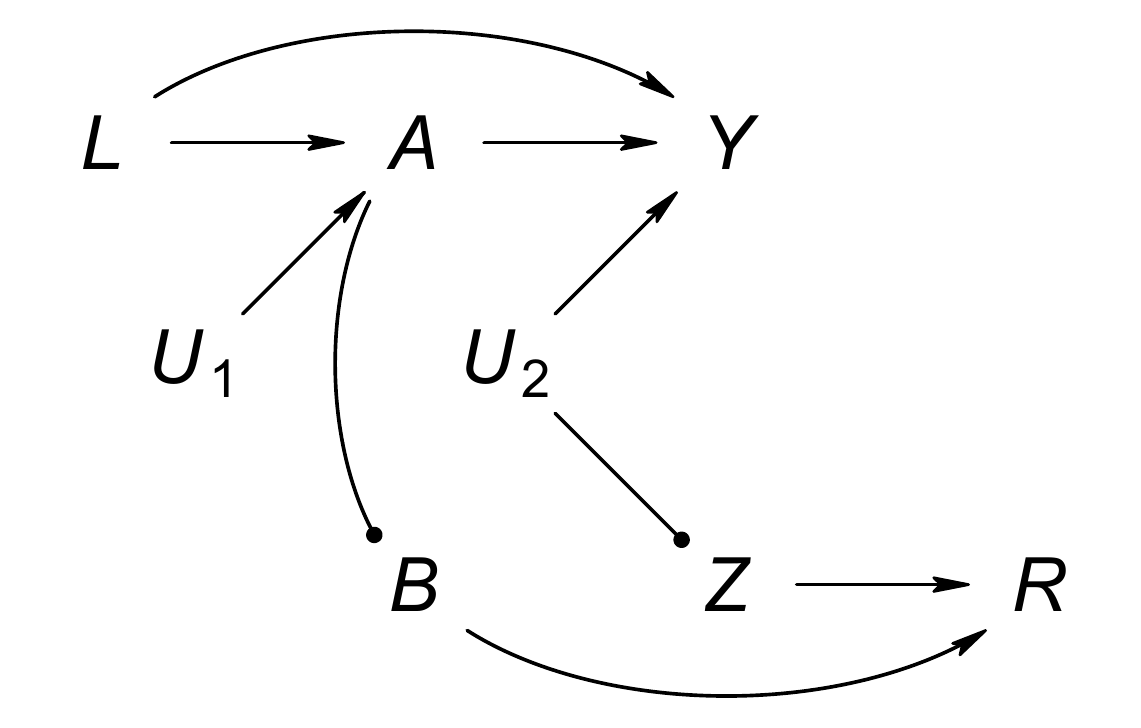}
\includegraphics[scale=.6]{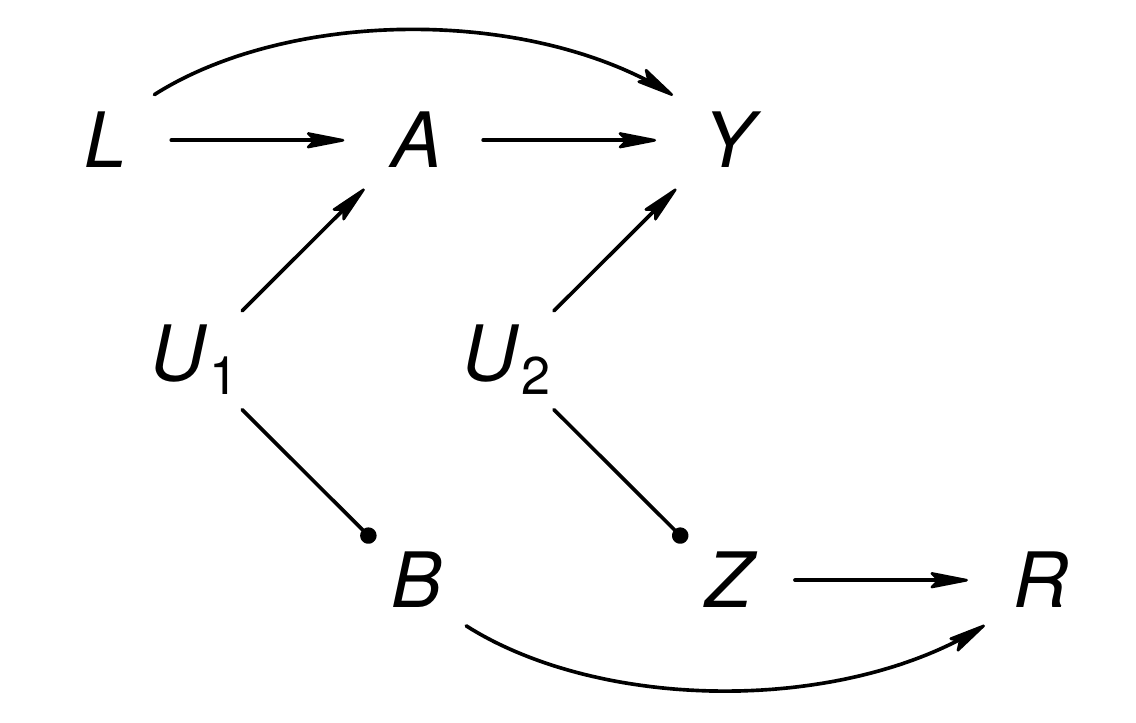}
\caption{Data structure for scenarios with misclassification on the outcome only (left) or on both the exposure and outcome (right). Bullet arrowheads represent deterministic relationships.}\label{f:DAG}
\end{figure}

\subsubsection{Missing data mechanism} 

For these simulations, we stipulated $L$, $B$ and $Z$ to be observed for all subjects. We consider scenarios where the dataset can be partitioned into a subset with validation data on all misclassified variables (denoted $R=1$) and a dataset with validation data on neither ($R=0$). That is, we simulated data such that subjects have validation data on both $A$ and $Y$ or neither on $A$ nor on $Y$. Values for the response indicator $R$ were generated according to the following (MAR) model:
\begin{align}
\logit\Pr(R=1|Z,B,Y,A,L) &= \logit\Pr(R=1|Z,B,L) \nonumber\\
&= \xi_0+\xi_1Z+\xi_2B+\xi_3ZB. \nonumber
\end{align}

\subsubsection{Scenarios}

We initially fixed most parameters of models \eqref{e:A} and \eqref{e:Y} at the respective values of ``Scenario A'' of \cite{Setoguchi2008}: $\alpha_1=0.8$, $\alpha_2=-0.25$, $\alpha_3=0.6$, $\alpha_4=-0.4$, $\alpha_5=-0.8$, $\alpha_6=-0.5$, $\alpha_7=0.7$, $\alpha_8=0$, $\alpha_9=0$, $\alpha_{10}=0$, $\beta_0=-3.85$, $\beta_1=0.3$, $\beta_2=-0.36$, $\beta_3=-0.73$, $\beta_4=-0.2$, $\beta_5=0$, $\beta_6=0$, $\beta_7=0$, $\beta_8=0.71$, $\beta_9=-0.19$ and $\beta_{10}=0.26$. 
Parameters $\eta_0$ and $\alpha_0$ were fixed at zero and $\xi_1$, $\xi_2$ and $\xi_3$ at $2$, $1$ and $-1$, respectively.
The remaining parameters and $\beta_{0}$ were allowed to vary across scenarios as per Table~\ref{t:parVal}.

Scenarios differ by sample size $n$, the presence of outcome misclassification, potentially misclassified outcome prevalence (via $\mu_0$), the associations between the exposure and outcome on the one hand and the respective misclassified versions on the other (via $\alpha_{11}$ and $\beta_{11}$), outcome model intercept $\beta_{0}$, the conditional log-OR $\gamma$, or the size of the validation subset (via $\xi_0$). Based on an iterative Monte Carlo integration approach \citep{Austin2008}, we specified $\gamma$ so as to keep the target marginal log odds ratio at $-0.4$.

\begin{table}
\centering
\caption{Simulation parameter values used in the Monte Carlo studies.\break
}\label{t:parVal}
\begin{tabular*}{\textwidth}{
@{\extracolsep{\fill}}lllcccccccc@{\extracolsep{\fill}}
}
\toprule
&Exposure&&&&&&&
\\
Scenario&misclassification&$n$&$\mu_{0}$&$\alpha_{11}$&$\beta_0$&$\beta_{11}$&$\gamma$&$\xi_0$\\\midrule
1&Absent&$5000$&$-2$&$0$&$-3.85$&$2$&$-0.431$&$-1.5$\\
2&Absent&$5000$&$-3$&$0$&$-3.85$&$2$&$-0.417$&$-1.5$\\
3&Absent&$5000$&$-2$&$0$&$-3.85$&$4$&$-0.624$&$-1.5$\\
4&Absent&$5000$&$-2$&$0$&$-3.85$&$2$&$-0.431$&$-2.5$\\
5&Present&$5000$&$-2$&$2$&$-3.85$&$2$&$-0.431$&$-1.5$\\
6&Present&$5000$&$-3$&$2$&$-3.85$&$2$&$-0.417$&$-1.5$\\
7&Present&$5000$&$-2$&$4$&$-3.85$&$2$&$-0.431$&$-1.5$\\
8&Present&$5000$&$-2$&$2$&$-3.85$&$4$&$-0.624$&$-1.5$\\
9&Present&$5000$&$-2$&$2$&$-3.85$&$2$&$-0.431$&$-2.5$\\
10&Absent&$10000$&$-2$&$0$&$-3.85$&$2$&$-0.431$&$-1.5$\\
11&Absent&$10000$&$-3$&$0$&$-3.85$&$2$&$-0.417$&$-1.5$\\
12&Absent&$10000$&$-2$&$0$&$-3.85$&$4$&$-0.624$&$-1.5$\\
13&Absent&$10000$&$-2$&$0$&$-3.85$&$2$&$-0.431$&$-2.5$\\
14&Present&$10000$&$-2$&$2$&$-3.85$&$2$&$-0.431$&$-1.5$\\
15&Present&$10000$&$-3$&$2$&$-3.85$&$2$&$-0.417$&$-1.5$\\
16&Present&$10000$&$-2$&$4$&$-3.85$&$2$&$-0.431$&$-1.5$\\
17&Present&$10000$&$-2$&$2$&$-3.85$&$4$&$-0.624$&$-1.5$\\
18&Present&$10000$&$-2$&$2$&$-3.85$&$2$&$-0.431$&$-2.5$\\
19&Absent&$5000$&$-2$&$0$&$-2$&$2$&$-0.470$&$-1.5$\\
20&Absent&$5000$&$-3$&$0$&$-2$&$2$&$-0.445$&$-1.5$\\
21&Absent&$5000$&$-2$&$0$&$-2$&$4$&$-0.641$&$-1.5$\\
22&Absent&$5000$&$-2$&$0$&$-2$&$2$&$-0.470$&$-2.5$\\
23&Present&$5000$&$-2$&$2$&$-2$&$2$&$-0.470$&$-1.5$\\
24&Present&$5000$&$-3$&$2$&$-2$&$2$&$-0.445$&$-1.5$\\
25&Present&$5000$&$-2$&$4$&$-2$&$2$&$-0.470$&$-1.5$\\
26&Present&$5000$&$-2$&$2$&$-2$&$4$&$-0.641$&$-1.5$\\
27&Present&$5000$&$-2$&$2$&$-2$&$2$&$-0.470$&$-2.5$\\
28&Absent&$10000$&$-2$&$0$&$-2$&$2$&$-0.470$&$-1.5$\\
29&Absent&$10000$&$-3$&$0$&$-2$&$2$&$-0.445$&$-1.5$\\
30&Absent&$10000$&$-2$&$0$&$-2$&$4$&$-0.641$&$-1.5$\\
31&Absent&$10000$&$-2$&$0$&$-2$&$2$&$-0.470$&$-2.5$\\
32&Present&$10000$&$-2$&$2$&$-2$&$2$&$-0.470$&$-1.5$\\
33&Present&$10000$&$-3$&$2$&$-2$&$2$&$-0.445$&$-1.5$\\
34&Present&$10000$&$-2$&$4$&$-2$&$2$&$-0.470$&$-1.5$\\
35&Present&$10000$&$-2$&$2$&$-2$&$4$&$-0.641$&$-1.5$\\
36&Present&$10000$&$-2$&$2$&$-2$&$2$&$-0.470$&$-2.5$\\
\bottomrule
\end{tabular*}
\end{table}

\subsubsection{Estimators}
We considered five estimators of the OR for the marginal exposure-outcome effect: a crude estimator (labeled Crude) that ignores both confounding and misclassication of any variable, a misclassification naive estimator (labeled PS) that addresses confounding through IPW, complete cases analysis (CCA) in which IPW is applied only to the subset of subjects with validation data, the Gravel and Platt estimator (GP) that ignores exposure misclassification, and the method proposed in this article (labeled IPWM). 
Both GP and IPWM are implemented using the R function \texttt{mecor::ipwm} \citep{mecor,Nab2018}, which in the simulation settings considered uses iteratively reweighted least squares via the \texttt{stats::glm} function for maximum likelihood estimation.
Unlike Gravel and Platt \citep{Gravel2018}, we used a non-parametric rather than a semi-parametric bootstrap procedure for estimating standard errors and constructing confidence intervals. Semi-parametrically generating response indicators would preferably require modelling of (or making additional assumptions about) the missing data mechanism. 
For all methods and each original dataset, we drew 1000 bootstrap samples for variance estimation and the construction of percentile confidence intervals.

All estimators are based on a function of the estimated outcome probability $P_1$ in the exposed group and the estimated outcome probability $P_0$ in the unexposed group. However, since $P_1$ and $P_0$ may take a value of 0 or 1, the crude odds ratio $[P_1/(1-P_1)]/[P_0/(1-P_0)]$ need not exist. 
In contrast to what is often (implicitly) done in simulation studies---i.e., studying the properties of the estimators after conditioning on datasets where $[P_1/(1-P_1)]/[P_0/(1-P_0)]$ is defined---we first define $P_1^\ast=(P_1s+1)/(s+2)$ and $P_0^\ast=(P_0s+1)/(s+2)$ for a large positive number $s$ (here set to $10^6$) and then regard $[P_1^\ast/(1-P_1^\ast)]/[P_0^\ast/(1-P_0^\ast)]$ as the estimator of the OR for the exposure-outcome association. This ensures the estimator is always defined and effectively shrinks the outcome probabilities towards 0.5 and the OR towards 1 (Appendix~\nameref{s:app_shrinkage}).

For PS and CCA, we used a logistic regression of $B$ and $A$, respectively, on covariates $L_1$ through $L_{10}$ as main effects to estimate the propensity scores. Taking the crude OR for the association between $B$ and $Z$ (PS) or $A$ and $Y$ (CCA) over the data weighted by the reciprocal of the propensity scores provided an estimate of target OR. R code for the methods GP and IPWM is given in Appendix~\nameref{s:app_Rcode}.

\subsection{Results}\label{s:sim_results}
The treatment assignment mechanism detailed above resulted in average exposure rates ranging from 17\% to 51\%, whereas average outcome rates ranged from 3\% to 22\%.
Across all simulation studies, the average outcome, exposure and joint misclassification rates ranged from 6\% to 18\%, from 0 to 33\% and from 0\% to 6\%, respectively. Approximately 16\% to 32\% of subjects were allocated validation data.

The results on the performance of the various methods in simulations studies 1-9 are provided in Table~\ref{t:sim_results} (see Supplementary Table S.3 for the results on all scenarios).

As expected, Crude, PS and CCA clearly showed bias with respect to the target log OR of $-0.4$. The bias associated with restricting the analysis to records with validation data is likely brought on to a large extent by collider stratification, with $R$ acting as the collider here (cf. Figure~\ref{f:DAG}). Both Crude and PS indicated a null effect, as one would anticipate in view of the marginal and $L$-conditional independence of $B$ and $Z$ implied by the simulation set-up. The empirical coverage probabilities were, although low for both estimators, similar to substantially larger for PS as compared with Crude. Paralleling this is that Crude, whose (implicit) propensity score model is inherently at least as parsimonious, yielded similar to smaller empirical and sample standard errors as compared with PS. With the average fraction of subjects with validation data being as low as 16\% (in scenarios with low $\xi_0$) to 32\%, it is not unsurprising that Crude was subject to the largest degree of variability.

The results for the IPWM approach are generally favourable and in line with its theoretical (large sample) properties. Note that the results for GP and IPWM are identical for scenarios 1-4, 10-13, 19-22 and 28-31, since the methods are equivalent in the absence of exposure misclassification. In all other scenarios, i.e., scenarios for which GP was not developed, GP performed substantially worse than IPWM.
The non-zero, albeit relatively small, systematic deviations of the IPWM point estimates from the target $-0.4$, notably the estimated bias of $-0.097$ (scenario 2), may be attributable in part to the outcome being rare (with prevalence ranging from 3\% to 8\% across scenarios 1-9). This is indicated by the superior performance of IPWM in scenarios where the outcome is more prevalent (scenarios 10-36, prevalence up to 22\%). A similar observation was made by \cite{Gravel2018}.
The standard errors for GP and IPWM were noticeably higher than those of Crude and PS, which is unsurprising in view of the discrepancies in the number of estimated parameters. As expected, increasing the sample size, the true outcome rate (via $\beta_0$) or both led to a decrease in the variability of IPWM (cf. Table~\ref{t:parVal} and Supplementary Table S.3). Throughout the empirical coverage probabilities of IPWM were close to the nominal level of 0.95.

\begin{table}
\centering
\caption{Results for simulation studies 1-9 on the performance of different causal estimators in various scenarios of confounding and misclassification in exposure and outcome. Abbreviations: PS, propensity score method ignoring misclassification; CCA, complete case analysis; GP, Gravel and Platt estimator ignoring exposure misclassification; IPWM, inverse probability weighting method for confounding and joint exposure and outcome misclassification; BSE, estimated standard error for the bias due to Monte Carlo error; SE, empirical standard error; SSE, sample standard error; CP, empirical coverage probability. In all scenarios, the true marginal log OR (estimand) was $-0.4$.\break
}\label{t:sim_results}
\begin{tabular*}{\textwidth}{
@{\extracolsep{\fill}}lcccccc@{\extracolsep{\fill}}
}
\toprule
&\multicolumn{6}{c}{Crude}\\\cline{2-7}
Scenario&Bias&BSE&MSE&SE&SSE&CP
\\\midrule
1 & $\phantom{-}0.394$ & $0.004$ & $0.275$ & $0.119$ & $0.118$ & $0.122$ \\
2 & $\phantom{-}0.382$ & $0.006$ & $0.329$ & $0.183$ & $0.184$ & $0.492$ \\
3 & $\phantom{-}0.394$ & $0.004$ & $0.272$ & $0.117$ & $0.118$ & $0.116$ \\
4 & $\phantom{-}0.401$ & $0.004$ & $0.278$ & $0.117$ & $0.118$ & $0.102$ \\
5 & $\phantom{-}0.401$ & $0.003$ & $0.252$ & $0.090$ & $0.088$ & $0.007$ \\
6 & $\phantom{-}0.407$ & $0.004$ & $0.298$ & $0.132$ & $0.134$ & $0.133$ \\
7 & $\phantom{-}0.396$ & $0.003$ & $0.243$ & $0.086$ & $0.088$ & $0.009$ \\
8 & $\phantom{-}0.395$ & $0.003$ & $0.242$ & $0.086$ & $0.088$ & $0.005$ \\
9 & $\phantom{-}0.398$ & $0.003$ & $0.247$ & $0.089$ & $0.088$ & $0.005$ \\
\midrule
&\multicolumn{6}{c}{PS}\\\cline{2-7}
Scenario&Bias&BSE&MSE&SE&SSE&CP
\\\midrule
1 & $\phantom{-}0.392$ & $0.005$ & $0.321$ & $0.168$ & $0.169$ & $0.382$ \\
2 & $\phantom{-}0.379$ & $0.008$ & $0.407$ & $0.264$ & $0.258$ & $0.738$ \\
3 & $\phantom{-}0.389$ & $0.006$ & $0.327$ & $0.175$ & $0.169$ & $0.402$ \\
4 & $\phantom{-}0.389$ & $0.006$ & $0.327$ & $0.176$ & $0.168$ & $0.392$ \\
5 & $\phantom{-}0.402$ & $0.003$ & $0.252$ & $0.090$ & $0.088$ & $0.010$ \\
6 & $\phantom{-}0.407$ & $0.004$ & $0.297$ & $0.131$ & $0.135$ & $0.136$ \\
7 & $\phantom{-}0.396$ & $0.003$ & $0.243$ & $0.086$ & $0.088$ & $0.009$ \\
8 & $\phantom{-}0.395$ & $0.003$ & $0.242$ & $0.086$ & $0.088$ & $0.004$ \\
9 & $\phantom{-}0.398$ & $0.003$ & $0.247$ & $0.089$ & $0.088$ & $0.005$ \\
\midrule
&\multicolumn{6}{c}{CCA}\\\cline{2-7}
Scenario&Bias&BSE&MSE&SE&SSE&CP
\\\midrule
1 & $-0.078$ & $0.015$ & $0.476$ & $0.469$ & $0.491$ & $0.899$ \\
2 & $-0.117$ & $0.019$ & $0.615$ & $0.601$ & $0.900$ & $0.887$ \\
3 & $-0.020$ & $0.010$ & $0.301$ & $0.301$ & $0.300$ & $0.919$ \\
4 & $-0.093$ & $0.020$ & $0.640$ & $0.631$ & $1.158$ & $0.899$ \\
5 & $-0.145$ & $0.009$ & $0.307$ & $0.286$ & $0.286$ & $0.903$ \\
6 & $-0.109$ & $0.011$ & $0.357$ & $0.345$ & $0.362$ & $0.930$ \\
7 & $-0.213$ & $0.007$ & $0.282$ & $0.237$ & $0.250$ & $0.865$ \\
8 & $-0.209$ & $0.006$ & $0.231$ & $0.187$ & $0.186$ & $0.775$ \\
9 & $-0.175$ & $0.012$ & $0.422$ & $0.392$ & $0.411$ & $0.902$
\\
\midrule
\end{tabular*}
\end{table}

\begin{table}
\centering
\contcaption{(Continued.)\break
}
\begin{tabular*}{\textwidth}{
@{\extracolsep{\fill}}lcccccc@{\extracolsep{\fill}}
}
\toprule
&\multicolumn{6}{c}{GP}\\\cline{2-7}
Scenario&Bias&BSE&MSE&SE&SSE&CP
\\\midrule
1 & $-0.036$ & $0.011$ & $0.360$ & $0.359$ & $0.428$ & $0.958$ \\
2 & $-0.097$ & $0.016$ & $0.515$ & $0.505$ & $0.861$ & $0.938$ \\
3 & $-0.019$ & $0.007$ & $0.234$ & $0.233$ & $0.240$ & $0.939$ \\
4 & $-0.045$ & $0.016$ & $0.503$ & $0.501$ & $1.087$ & $0.944$ \\
5 & $\phantom{-}0.269$ & $0.008$ & $0.316$ & $0.244$ & $0.244$ & $0.799$ \\
6 & $\phantom{-}0.280$ & $0.010$ & $0.392$ & $0.314$ & $0.339$ & $0.862$ \\
7 & $\phantom{-}0.134$ & $0.008$ & $0.259$ & $0.241$ & $0.252$ & $0.926$ \\
8 & $\phantom{-}0.259$ & $0.004$ & $0.207$ & $0.140$ & $0.144$ & $0.570$ \\
9 & $\phantom{-}0.263$ & $0.010$ & $0.394$ & $0.325$ & $0.339$ & $0.883$ \\
\midrule
&\multicolumn{6}{c}{IPWM}\\\cline{2-7}
Scenario&Bias&BSE&MSE&SE&SSE&CP\\\midrule
1 & $-0.036$ & $0.011$ & $0.360$ & $0.359$ & $0.428$ & $0.958$ \\
2 & $-0.097$ & $0.016$ & $0.515$ & $0.505$ & $0.861$ & $0.938$ \\
3 & $-0.019$ & $0.007$ & $0.234$ & $0.233$ & $0.240$ & $0.939$ \\
4 & $-0.045$ & $0.016$ & $0.503$ & $0.501$ & $1.087$ & $0.944$ \\
5 & $-0.017$ & $0.009$ & $0.286$ & $0.286$ & $0.284$ & $0.942$ \\
6 & $-0.014$ & $0.011$ & $0.359$ & $0.359$ & $0.386$ & $0.958$ \\
7 & $\phantom{-}0.004$ & $0.008$ & $0.243$ & $0.243$ & $0.261$ & $0.969$ \\
8 & $-0.004$ & $0.006$ & $0.180$ & $0.180$ & $0.181$ & $0.958$ \\
9 & $-0.025$ & $0.012$ & $0.375$ & $0.374$ & $0.415$ & $0.956$ \\
\bottomrule
\end{tabular*}
\end{table}

\section{Discussion}\label{s:discussion}
The analysis of epidemiologic data is often complicated by the presence of confounding and misclassifications in exposure and outcome variables. In this paper we proposed a new estimator for estimating a marginal odds-ratio in the presence of confouding and joint misclassification of the exposure and outcome variables. In simulation studies, this weighting estimator showed promising finite sample performance, reducing bias and mean squared error as compared with simpler methods.

The proposed IPWM estimator is an extension of the inverse probability weighting estimator recently proposed by \cite{Gravel2018} (GP) which only addresses the misclassification in the outcome. IPWM and GP are (mathematically) equivalent when the exposure is (assumed to be) measured without misclassification error.

Like the Gravel and Platt approach, IPWM relies on estimates of sensitivity and specificity or positive and negative predictive values for the misclassified variables. In this paper, we used an internal approach where a portion of subjects would receive error-free (`gold standard') measurements on either or both the outcome and exposure. 

We anticipate that in some settings the likelihood may not be fully identifiable from the data at hand. In these settings, it may be possible to incorporate external rather than internal information on the misclassification rates, possibly through a Bayesian approach using prior assumptions about misclassification probabilities. When validation data is external, however, it may be necessary to assume misclassification to be independent of covariates $L$, because external studies seldom consider the same covariates as the main study \citep{Lyles2011}. External validation approaches also require the assumption that the misclassification parameters targeted in the validation sample are transportable to the main study. 

An important advantage of the IPWM approach is that the subjects with validation data need not form a completely random subset. The proposed method was developed under the assumption that validation data allocation occurs in an `ignorable' fashion \citep{Rubin1976}. In practice, it may be that the researchers have limited control over the validation data allocation mechanism. For instance, it is conceivable that individuals with specific indications (e.g., with a realisation of $L$, $B$ or $Z$) are practically ineligible to be assigned a double measurement of the exposure ($A$ and $B$) and outcome ($Y$ and $Z$). Further, the estimator also allows for validation subjects to receive either the double exposure or double outcome measurement. We simulated data such that subjects have validation data on both the exposure and outcome variables or on neither. Although this may greatly simplify analysis and enhance efficiency, in practice it is not necessary to assume that this condition holds. An interesting scenario is where subjects have validation data on at most one variable, i.e., on the exposure variable or the outcome variable but not both. In this case, valid estimation would require additional modelling assumptions; for example, the  error-free outcome variable cannot then be regressed on the error-free exposure variable.

To accommodate settings where validation data allocation is not completely at random, we deviated from the semi-parametric bootstrap procedure for variance estimation proposed by Gravel and Platt. Instead, the non-parametric procedure we used requires less assumptions regarding the validation subset sampling procedure. The non-parametric procedure showed good performance in our simulations.



Whilst we have discussed under what conditions the proposed method consistently estimates or at least identifies the target quantity, the assumptions 
may be untenable in particular settings. Particularly, an infallible measurement tool for the exposure and outcome that can be performed on a subset of the data need not always exist. The robustness to deviations of infallibility is an interesting and important direction for further research. This is especially relevant where there exists considerable uncertainty about the tenability of the assumptions that is difficult to incorporate in the analysis. An obvious and flexible alternative to IPWM is to multiply impute missing values including absent measurement error-free variables before implementing IPW (MI+IPW). Although MI+IPW and IPWM may be comparable in terms of their assumptions, it is yet unclear how they behave under assumption violations such as misspecification of the outcome model. 

An advantageous property of MI+IPW is that it can easily accommodate missing covariate values. Other alternatives that can accommodate missing covariates were recently developed by \cite{Shu2018}. Their proposed weighting estimators simultaneously addresses confounding, misclassification of the outcome (but not of the exposure) and measurement error on the covariates under a classical additive measurement error model. The methods can be implemented using validation data or repeated measurements and use a simple misclassification model (in which the outcome surrogate is independent of exposure or covariates given the target outcome) that is suitable for performing sensitivity analyses.

Another interesting area for further research is where the researchers do have control over who is referred for further testing by the assumed infallible measurement tool(s). An obvious choice is to adopt a completely at random strategy (simple random sampling). However, other referral (sampling) strategies exist and it is not clear what strategy leads to the most favourable estimator properties for the given setting.  

In summary, we have developed an extension to an existing method, to allow for valid estimation of a marginal causal OR in the presence of confounding and a commonly ignored and misunderstood source of bias---joint exposure and outcome misclassification. The R function \texttt{mecor::ipwm} has been made available to facilitate implementation \citep{mecor,Nab2018}.


\backmatter

\section*{Acknowledgements}
RHHG was funded by the Netherlands Organization for Scientific Research (NWO-Vidi project 917.16.430). The views expressed in this article are those of the authors and not necessarily any funding body.


\bibliographystyle{biom} 

\begin{thebibliography}{}

\bibitem[\protect\citeauthoryear{Austin and Stafford}{Austin and
  Stafford}{2008}]{Austin2008}
Austin, P.~C. and Stafford, J. (2008).
\newblock The performance of two data-generation processes for data with
  specified marginal treatment odds ratios.
\newblock {\em Communications in Statistics - Simulation and Computation} {\bf
  37,} 1039--1051.

\bibitem[\protect\citeauthoryear{Brakenhoff, Mitroiu, Keogh, Moons, Groenwold,
  and van Smeden}{Brakenhoff et~al.}{2018}]{Brakenhoff2018}
Brakenhoff, T.~B., Mitroiu, M., Keogh, R.~H., Moons, K.~G., Groenwold, R.~H.,
  and van Smeden, M. (2018).
\newblock Measurement error is often neglected in medical literature: a
  systematic review.
\newblock {\em Journal of Clinical Epidemiology} {\bf 98,} 89--97.

\bibitem[\protect\citeauthoryear{Brenner, Savitz, and Gefeller}{Brenner
  et~al.}{1993}]{Brenner1993}
Brenner, H., Savitz, D.~A., and Gefeller, O. (1993).
\newblock The effects of joint misclassification of exposure and disease on
  epidemiologic measures of association exposure and disease on epidemiologic
  measures of association.
\newblock {\em Journal of Clinical Epidemiology} {\bf 46,} 1195--1202.

\bibitem[\protect\citeauthoryear{Brooks, Getz, Brennan, Pollack, and
  Fox}{Brooks et~al.}{2018}]{Brooks2018}
Brooks, D.~R., Getz, K.~D., Brennan, A.~T., Pollack, A.~Z., and Fox, M.~P.
  (2018).
\newblock The impact of joint misclassification of exposures and outcomes on
  the results of epidemiologic research.
\newblock {\em Current Epidemiology Reports} {\bf 5,} 166--174.

\bibitem[\protect\citeauthoryear{Culver, Ockene, Balasubramanian, Olendzki,
  Sepavich, Wactawski-Wende, Manson, Qiao, Liu, Merriam, et~al\mbox{.}}{Culver
  et~al.}{2012}]{Culver2012}
Culver, A.~L., Ockene, I.~S., Balasubramanian, R., Olendzki, B.~C., Sepavich,
  D.~M., Wactawski-Wende, J., Manson, J.~E., Qiao, Y., Liu, S., Merriam, P.~A.,
  et~al. (2012).
\newblock Statin use and risk of diabetes mellitus in postmenopausal women in
  the women's health initiative.
\newblock {\em Archives of internal medicine} {\bf 172,} 144--152.

\bibitem[\protect\citeauthoryear{Dawid}{Dawid}{1979}]{Dawid1979}
Dawid, A. (1979).
\newblock Conditional independence in statistical theory.
\newblock {\em Journal of the Royal Statistical Society, Series B
  (Methodological)} pages 1--31.

\bibitem[\protect\citeauthoryear{Gravel and Platt}{Gravel and
  Platt}{2018}]{Gravel2018}
Gravel, C.~A. and Platt, R.~W. (2018).
\newblock Weighted estimation for confounded binary outcomes subject to
  misclassification.
\newblock {\em Statistics in Medicine} {\bf 37,} 425--436.

\bibitem[\protect\citeauthoryear{Holland}{Holland}{1986}]{Holland1986}
Holland, P. (1986).
\newblock Statistics in causal inference.
\newblock {\em Journal of the American Statistical Association} {\bf 81,}
  945--960.

\bibitem[\protect\citeauthoryear{Holland}{Holland}{1988}]{Holland1988}
Holland, P. (1988).
\newblock Causal inference, path analysis, and recursive structural equations
  models.
\newblock {\em Sociological Methodology} {\bf 18,} 449--484.

\bibitem[\protect\citeauthoryear{Jurek, Greenland, and Maldonado}{Jurek
  et~al.}{2008}]{Jurek2008}
Jurek, A.~M., Greenland, S., and Maldonado, G. (2008).
\newblock Brief report: how far from non-differential does exposure or disease
  misclassification have to be to bias measures of association away from the
  null?
\newblock {\em International Journal of Epidemiology} {\bf 37,} 382--385.

\bibitem[\protect\citeauthoryear{Jurek, Maldonado, Greenland, and Church}{Jurek
  et~al.}{2006}]{Jurek2006}
Jurek, A.~M., Maldonado, G., Greenland, S., and Church, T.~R. (2006).
\newblock Exposure-measurement error is frequently ignored when interpreting
  epidemiologic study results.
\newblock {\em European journal of epidemiology} {\bf 21,} 871--876.

\bibitem[\protect\citeauthoryear{Kristensen}{Kristensen}{1992}]{Kristensen1992}
Kristensen, P. (1992).
\newblock Bias from nondifferential but dependent misclassification of exposure
  and outcome.
\newblock {\em Epidemiology} pages 210--215.

\bibitem[\protect\citeauthoryear{Leong, Dasgupta, Bernatsky, Lacaille,
  Avina-Zubieta, and Rahme}{Leong et~al.}{2013}]{Leong2013}
Leong, A., Dasgupta, K., Bernatsky, S., Lacaille, D., Avina-Zubieta, A., and
  Rahme, E. (2013).
\newblock Systematic review and meta-analysis of validation studies on a
  diabetes case definition from health administrative records.
\newblock {\em PloS one} {\bf 8,} e75256.

\bibitem[\protect\citeauthoryear{Lyles, Tang, Superak, King, Celentano, Lo, and
  Sobel}{Lyles et~al.}{2011}]{Lyles2011}
Lyles, R.~H., Tang, L., Superak, H.~M., King, C.~C., Celentano, D.~D., Lo, Y.,
  and Sobel, J.~D. (2011).
\newblock Validation data-based adjustments for outcome misclassification in
  logistic regression: an illustration.
\newblock {\em Epidemiology} {\bf 22,} 589.

\bibitem[\protect\citeauthoryear{Marcum, Sevick, and Handler}{Marcum
  et~al.}{2013}]{Marcum2013}
Marcum, Z.~A., Sevick, M.~A., and Handler, S.~M. (2013).
\newblock Medication nonadherence: a diagnosable and treatable medical
  condition.
\newblock {\em Jama} {\bf 309,} 2105--2106.

\bibitem[\protect\citeauthoryear{Nab}{Nab}{2019}]{mecor}
Nab, L. (2019).
\newblock {\em mecor: Measurement Error Corrections}.
\newblock R package version 0.1.0. Available from:
  https://github.com/LindaNab/mecor.git.

\bibitem[\protect\citeauthoryear{Nab, Groenwold, Welsing, and van Smeden}{Nab
  et~al.}{2018}]{Nab2018}
Nab, L., Groenwold, R.~H., Welsing, P.~M., and van Smeden, M. (2018).
\newblock Measurement error in continuous endpoints in randomised trials:
  problems and solutions.
\newblock {\em arXiv preprint arXiv:1809.07068} .

\bibitem[\protect\citeauthoryear{Neyman, Iwaszkiewicz, and {St.
  Kolodziejczyk}}{Neyman et~al.}{1935}]{Neyman1935}
Neyman, J., Iwaszkiewicz, K., and {St. Kolodziejczyk} (1935).
\newblock Statistical problems in agricultural experimentation.
\newblock {\em Supplement to the Journal of the Royal Statistical Society} {\bf
  2,} 107--180.

\bibitem[\protect\citeauthoryear{Ni, Leong, Dasgupta, and Rahme}{Ni
  et~al.}{2017}]{Ni2017}
Ni, J., Leong, A., Dasgupta, K., and Rahme, E. (2017).
\newblock Correcting hazard ratio estimates for outcome misclassification using
  multiple imputation with internal validation data.
\newblock {\em Pharmacoepidemiology and drug safety} {\bf 26,} 925--934.

\bibitem[\protect\citeauthoryear{Pearl}{Pearl}{2009}]{Pearl2009}
Pearl, J. (2009).
\newblock {\em Causality: Models, Reasoning and Inference}.
\newblock Cambridge University Press, New York.

\bibitem[\protect\citeauthoryear{{R Core Team}}{{R Core Team}}{2018}]{R2018}
{R Core Team} (2018).
\newblock {\em R: A Language and Environment for Statistical Computing}.
\newblock R Foundation for Statistical Computing, Vienna, Austria.

\bibitem[\protect\citeauthoryear{Rubin}{Rubin}{1974}]{Rubin1974}
Rubin, D. (1974).
\newblock Estimating causal effects of treatments in randomized and
  nonrandomized studies.
\newblock {\em Journal of Educational Psychology} {\bf 66,} 688--701.

\bibitem[\protect\citeauthoryear{Rubin}{Rubin}{1976}]{Rubin1976}
Rubin, D. (1976).
\newblock Inference and missing data.
\newblock {\em Biometrika} {\bf 63,} 581--592.

\bibitem[\protect\citeauthoryear{Setoguchi, Schneeweiss, MA, Glynn, and
  Cook}{Setoguchi et~al.}{2008}]{Setoguchi2008}
Setoguchi, S., Schneeweiss, S., MA, M.~B., Glynn, R., and Cook, E. (2008).
\newblock Evaluating uses of data mining techniques in propensity score
  estimation: a simulation study.
\newblock {\em Pharmacoepidemiology and Drug Safety} {\bf 17,} 546--555.

\bibitem[\protect\citeauthoryear{Shu and Yi}{Shu and Yi}{2018}]{Shu2018}
Shu, D. and Yi, G.~Y. (2018).
\newblock Weighted causal inference methods with mismeasured covariates and
  misclassified outcomes.
\newblock {\em Statistics in Medicine} .

\bibitem[\protect\citeauthoryear{Tang, Lyles, Ye, Lo, and King}{Tang
  et~al.}{2013}]{Tang2013}
Tang, L., Lyles, R.~H., Ye, Y., Lo, Y., and King, C.~C. (2013).
\newblock Extended matrix and inverse matrix methods utilizing internal
  validation data when both disease and exposure status are misclassified.
\newblock {\em Epidemiologic Methods} {\bf 2,} 49--66.

\bibitem[\protect\citeauthoryear{VanderWeele and Hern{\'a}n}{VanderWeele and
  Hern{\'a}n}{2012}]{VanderWeele2012}
VanderWeele, T.~J. and Hern{\'a}n, M.~A. (2012).
\newblock Results on differential and dependent measurement error of the
  exposure and the outcome using signed directed acyclic graphs.
\newblock {\em American journal of epidemiology} {\bf 175,} 1303--1310.

\bibitem[\protect\citeauthoryear{Vogel, Brenner, Pfahlberg, and Gefeller}{Vogel
  et~al.}{2005}]{Vogel2005}
Vogel, C., Brenner, H., Pfahlberg, A., and Gefeller, O. (2005).
\newblock The effects of joint misclassification of exposure and disease on the
  attributable risk.
\newblock {\em Statistics in Medicine} {\bf 24,} 1881--1896.

\end{thebibliography}

\appendix
\setcounter{equation}{0}
\renewcommand\theequation{\thetheorem.\arabic{equation}}

\newpage
\section{I}\label{s:app_weighting}
\setcounter{theorem}{0}
\renewcommand*{\thetheorem}{A.\arabic{theorem}}

\begin{theorem}\label{t:A1}
For any $a,l$, let
\begin{align*}\varphi(a,l)&=\frac{\varphi^\ast(a,l)}{\Exp[\varphi^\ast(A,L)|A=a]}~~\text{and}~~\varphi^\ast(a,l)=\frac{1}{\Pr(A=a|L=l)}.\end{align*}
If $Y(A)=Y$ (consistency), $(Y(0),Y(1))\CI A|L=l$ (conditional exchangeability), $\Pr(A=a)>0$ and $\Pr(A=a|L=l)>0$ (positivity) for all $a$ and every $l$ in the support of $L$, then 
\begin{align*}
\Exp[Y(a)]&=\Exp[\varphi(A,L)I(Y=1)|A=a].
\end{align*}
\end{theorem}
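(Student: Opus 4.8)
The plan is to evaluate the right-hand side directly, unwinding the definition of $\varphi$ and invoking the three causal assumptions in turn. The first move is to pin down the normalising denominator $\Exp[\varphi^\ast(A,L)|A=a]$. Conditioning on $A=a$ and writing the conditional law of $L$ via Bayes' rule as $f_{L|A=a}(l)=\Pr(A=a|L=l)f_L(l)/\Pr(A=a)$, the factor $\Pr(A=a|L=l)$ cancels the inverse propensity score $\varphi^\ast(a,l)=1/\Pr(A=a|L=l)$, and I expect to obtain
\begin{align*}
\Exp[\varphi^\ast(A,L)|A=a] &= \frac{1}{\Pr(A=a)}.
\end{align*}
Consequently $\varphi(a,l)=\Pr(A=a)/\Pr(A=a|L=l)$, which is precisely the weight $W$ appearing in \eqref{e:or2}.

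With the weight identified, the second step is to rewrite the target. Because $Y$ is binary, $I(Y=1)=Y$, and consistency ($Y(A)=Y$) permits replacing $Y$ by $Y(a)$ throughout the stratum $A=a$. The claim then reduces to showing
\begin{align*}
\Pr(A=a)\,\Exp\!\left[\frac{Y(a)}{\Pr(A=a|L)}\,\Big|\,A=a\right] &= \Exp[Y(a)].
\end{align*}

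The third step is to iterate expectations by conditioning on $L$ within the $A=a$ stratum and to invoke conditional exchangeability, which yields $\Exp[Y(a)|A=a,L=l]=\Exp[Y(a)|L=l]$. Substituting the same Bayes factorisation of $f_{L|A=a}$ as above, the propensity score cancels a second time and the outer $\Pr(A=a)$ is absorbed, leaving $\int \Exp[Y(a)|L=l]\,f_L(l)\,dl=\Exp[Y(a)]$ by the tower property. Positivity is what licenses dividing by $\Pr(A=a|L=l)$ and $\Pr(A=a)$ at each stage.

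I expect the only real subtlety to be bookkeeping rather than technical depth: each simplification is an application of Bayes' rule, so the main obstacle is organisational---keeping the normalisation built into $\varphi$ aligned with the reweighting that conditioning on $A=a$ induces, and invoking exchangeability at the correct conditioning level (given $A=a,L$, not merely given $L$).
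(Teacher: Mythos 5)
Your proposal is correct and follows essentially the same route as the paper's proof: both first establish $\Exp[\varphi^\ast(A,L)|A=a]=1/\Pr(A=a)$ via Bayes' theorem, then expand the weighted conditional expectation over $L$, cancel the propensity score by a second application of Bayes' rule, and invoke consistency and conditional exchangeability to recover $\Exp[Y(a)]$. The only differences are cosmetic (you apply consistency to the random variable $Y$ on the stratum $A=a$ before conditioning on $L$, whereas the paper applies it to the conditional probabilities after the Bayes cancellation), so no further comparison is needed.
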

\begin{proof}
We begin by considering $\Exp[\varphi^\ast(A,L)|A=a]$. By the law of the unconscious statistician and Bayes' theorem, we have
\begin{align*}
\Exp[\varphi^\ast(A,L)|A=a]
	&= \sum_l\frac{\Pr(L=l|A=a)}{\Pr(A=a|L=l)} \\
    &= \sum_l\frac{\Pr(A=a|L=l)\Pr(L=l)}{\Pr(A=a)\Pr(A=a|L=l)} \\
    &= \frac{1}{\Pr(A=a)}\sum_l\Pr(L=l) \\
    &= \frac{1}{\Pr(A=a)}.
\end{align*}
Hence, for all $a,y$, we have
\begin{align}
\sum_l\varphi(a,l)\Pr(Y=y,L=l|A=a) 
	&= \sum_l\frac{\Pr(Y=y,L=l|A=a)\Pr(A=a)}{\Pr(A=a|L=l)} \nonumber\\
    &= \sum_l\frac{\Pr(Y=y|A=a,L=l)\Pr(A=a|L=l)\Pr(L=l)}{\Pr(A=a|L=l)} \nonumber\\
    &= \sum_l\Pr(Y=y|A=a,L=l)\Pr(L=l) \nonumber\\
    &= \sum_l\Pr(Y(a)=y|A=a,L=l)\Pr(L=l) \label{Consistency}\\
    &= \sum_l\Pr(Y(a)=y|L=l)\Pr(L=l) \label{ConditionalExchangeability}\\
    &= \Pr(Y(a)=y), \nonumber
\end{align}
where \eqref{Consistency} and \eqref{ConditionalExchangeability} hold under consistency and conditional exchangeability given $L$, respectively. Positivity ensures the weights are defined/exist. Hence, $\Exp[\varphi(A,L)I(Y=1)|A=a]=\sum_l\varphi(a,l)\Pr(Y=1,L=l|A=a)=\Exp[Y(a)]$, as desired.
\end{proof}

\setcounter{appendixcorollary}{0}
\renewcommand*{\theappendixcorollary}{A.\arabic{appendixcorollary}}

\begin{appendixcorollary} \label{TC1}
For any $y,a,l$, let
\begin{align*}\varphi(a,l)&=\frac{\varphi^\ast(a,l)}{\Exp[\varphi^\ast(A,L)|A=a]},~~\varphi^\ast(a,l)=\frac{1}{\Pr(A=a|L=l)},~~\text{and}\\
\phi(a,l)&=\frac{\Pr(Y=1,L=l|A=a)}{\Pr(Z=1,L=l|B=a)}.
\end{align*}
If $Y(A)=Y$, $(Y(0),Y(1))\CI A|L$ and positivity holds, then 
\begin{align*}
\Exp[Y(a)] &= \sum_l\varphi(a,l)\Pr(Y=1,L=l|A=a)\\
&= \sum_l\varphi(a,l)\phi(a,l)\Pr(Z=1,L=l|B=a)\\
&=
\Exp[\varphi(B,L)\phi(B,L)Z|B=a].
\end{align*}
\end{appendixcorollary}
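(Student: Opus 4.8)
The plan is to chain the three equalities in the statement from top to bottom, recognising that only the first carries genuine causal content while the remaining two are a change-of-measure bookkeeping that re-expresses the true-data functional in terms of the surrogates $(B,Z)$. For the first equality I would invoke Theorem~\ref{t:A1} directly: its proof already establishes, in its concluding line, that $\Exp[Y(a)]=\sum_l\varphi(a,l)\Pr(Y=1,L=l|A=a)$ under consistency, conditional exchangeability and positivity. Since $\varphi$ and $\varphi^\ast$ are defined identically here, the top line of the display is obtained for free, and no new argument is needed.

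For the second equality I would argue term by term in the sum over $l$. By the very definition of $\phi(a,l)$ we have $\phi(a,l)\Pr(Z=1,L=l|B=a)=\Pr(Y=1,L=l|A=a)$ whenever the denominator $\Pr(Z=1,L=l|B=a)$ is nonzero; multiplying through by $\varphi(a,l)$ and summing over $l$ then reproduces the middle expression. The only point requiring care is the well-definedness of $\phi$, which I would secure by using positivity (suitably extended to the joint law of $(Z,B,L)$) to guarantee each denominator is strictly positive, so that no summand is an indeterminate $0/0$.

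For the third equality I would recognise the middle sum as a conditional expectation via the law of the unconscious statistician. Conditioning on $B=a$ replaces $\varphi(B,L)$ and $\phi(B,L)$ by $\varphi(a,L)$ and $\phi(a,L)$, giving $\Exp[\varphi(B,L)\phi(B,L)Z|B=a]=\sum_l\varphi(a,l)\phi(a,l)\Exp[ZI(L=l)|B=a]$. Because $Z$ is binary, $\Exp[ZI(L=l)|B=a]=\Pr(Z=1,L=l|B=a)$, which collapses the expression back onto the middle sum and closes the chain.

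I expect the genuine mathematical work to reside entirely in the first equality, i.e. in Theorem~\ref{t:A1}, which performs the causal identification; the subsequent steps are purely a re-indexing of the weighting functional from the true variables $(A,Y)$ to their misclassified counterparts $(B,Z)$ through the factor $\phi$. The main obstacle, such as it is, is therefore the technical one of ensuring $\phi$ is well-defined and that substituting its definition inside the summation is legitimate, in particular handling any stratum $l$ with $\Pr(Z=1,L=l|B=a)=0$, for which the corresponding numerator must vanish as well.
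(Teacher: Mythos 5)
Your proposal is correct and matches the paper's intended argument: the paper states this corollary without a separate proof, treating it as an immediate consequence of Theorem~\ref{t:A1} (first equality), the definition of $\phi$ as a ratio (second equality), and rewriting the sum as a conditional expectation over the distribution of $(Z,L)$ given $B=a$ (third equality), which is exactly your chain. Your added remark about well-definedness of $\phi$ on strata where $\Pr(Z=1,L=l|B=a)=0$ is a legitimate point of care that the paper leaves implicit, but it does not constitute a different approach.
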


\section{II}\label{s:app_shrinkage}
\begin{theorem}
Fix some $s>0$ and let $P^\ast=(Ps+1)/(s+2)$ for all $P\in[0,1]$. If $(P_0,P_1)\in (0,1)\times(0,1)$, then 
\begin{align*}
1<\frac{P_1^\ast/(1-P_1^\ast)}{P_0^\ast/(1-P_0^\ast)} < \frac{P_1/(1-P_1)}{P_0/(1-P_0)}~~&\text{if } P_1>P_0,\\
1=\frac{P_1^\ast/(1-P_1^\ast)}{P_0^\ast/(1-P_0^\ast)} = \frac{P_1/(1-P_1)}{P_0/(1-P_0)}~~&\text{if } P_1=P_0,\text{ and}\\
1>\frac{P_1^\ast/(1-P_1^\ast)}{P_0^\ast/(1-P_0^\ast)} > \frac{P_1/(1-P_1)}{P_0/(1-P_0)}~~&\text{if } P_1<P_0
\end{align*}
\end{theorem}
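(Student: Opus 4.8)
The plan is to reduce the whole statement to a single derivative bound by passing from probabilities to odds and then to log-odds. Write $c=s+1>1$, and for $j\in\{0,1\}$ set $O_j=P_j/(1-P_j)$ and $O_j^\ast=P_j^\ast/(1-P_j^\ast)$, so that the three displayed chains are exactly comparisons of $O_1^\ast/O_0^\ast$ with $1$ and with $O_1/O_0$. First I would record that $P^\ast=\tfrac{s}{s+2}P+\tfrac{1}{s+2}$ is a convex combination of $P$ and $\tfrac12$; hence $P\in(0,1)$ forces $P^\ast\in(0,1)$, and the odds are well defined. Substituting $P=O/(1+O)$ then shows that the shrinkage induces the M\"obius map
\begin{align*}
O\longmapsto O^\ast=\frac{cO+1}{O+c},
\end{align*}
which fixes $O=1$ (i.e.\ $P=\tfrac12$) and is strictly increasing.

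Next, since every assertion concerns ratios, I would move to the logit scale. Setting $x=\log O$ and
\begin{align*}
\phi(x)=\log O^\ast=\log(ce^x+1)-\log(e^x+c),
\end{align*}
the claim for $P_1>P_0$ (equivalently $x_1>x_0$) becomes $0<\phi(x_1)-\phi(x_0)<x_1-x_0$, with equality throughout when $x_1=x_0$ and all inequalities reversed when $x_1<x_0$. All of these follow at once from the uniform bound $0<\phi'(x)<1$ via the mean value theorem, which gives $\phi(x_1)-\phi(x_0)=\phi'(\xi)(x_1-x_0)$ for some intermediate $\xi$; exponentiating then returns the statement in terms of the odds ratios.

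The crux is therefore to establish $0<\phi'(x)<1$ for all $x$. Writing $u=e^x>0$, I would compute
\begin{align*}
\phi'(x)=\frac{cu}{cu+1}-\frac{u}{u+c}=\frac{(c^2-1)\,u}{(cu+1)(u+c)}.
\end{align*}
Positivity is immediate since $c>1$ makes numerator and denominator positive. For the upper bound I would clear denominators: $\phi'(x)<1$ reduces to $(cu+1)(u+c)-(c^2-1)u>0$, and expanding collapses this to $cu^2+2u+c>0$, which is manifest. This settles the case $P_1>P_0$, and the equality and reversal cases are immediate consequences of the same bound.

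I do not anticipate a genuine obstacle; the only substantive step is recognising the reduction to the induced odds map and the derivative bound, after which everything is elementary. If one prefers to avoid calculus, the same conclusions follow by clearing denominators directly: the numerators of $O_1^\ast-O_0^\ast$ and of $O_1O_0^\ast-O_0O_1^\ast$ factor as $(c^2-1)(O_1-O_0)$ and $(O_1-O_0)\,[cO_0O_1+O_0+O_1+c]$, respectively, over positive denominators, so both the comparison with $1$ and the comparison with $O_1/O_0$ inherit their direction from the sign of $O_1-O_0$. I would nonetheless keep the logit/mean-value argument as the main line, since it makes transparent \emph{why} the odds ratio is compressed toward $1$.
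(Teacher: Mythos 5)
Your proposal is correct: the M\"obius identity $O^\ast=(cO+1)/(O+c)$ with $c=s+1$, the derivative formula $\phi'(x)=(c^2-1)u/\bigl((cu+1)(u+c)\bigr)$ with $u=e^x$, the reduction of $\phi'<1$ to $cu^2+2u+c>0$, and the two factorisations in your algebraic variant all check out, and the mean value theorem then delivers all three cases at once. The paper proves the same two monotonicity facts, but works directly on the probability scale: it shows that $P\mapsto\frac{Ps+1}{s+1-Ps}\cdot\frac{1-P}{P}$ (which is exactly your $O^\ast/O$) has negative derivative on $(0,1)$, giving the comparison with the raw odds ratio, and separately that $P\mapsto\frac{Ps+1}{s+1-Ps}$ (your $O^\ast$) is increasing, giving the comparison with $1$; the equality and reversed cases are handled by symmetry. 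Your two bounds $\phi'>0$ and $\phi'<1$ are precisely equivalent to these two facts, so the underlying idea is the same, but your change of variables buys two things the paper's version lacks: the derivative computation on the log-odds scale is much cleaner than the paper's quotient-rule expression $\frac{(-2P^2+2P-1)s-1}{P^2(1-(P-1)s)^2}$, and both inequalities fall out of the single uniform bound $0<\phi'<1$, which also explains conceptually that the map is a contraction of log-odds toward $0$. Your closing factorisation argument, with numerators $(c^2-1)(O_1-O_0)$ and $(O_1-O_0)[cO_0O_1+O_0+O_1+c]$ over positive denominators, is a genuinely more elementary route than anything in the paper, since it needs no calculus at all.
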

\begin{proof}
Suppose $(P_0,P_1)\in (0,1)\times(0,1)$. If and only if
\begin{align}
\frac{P_1^\ast/(1-P_1^\ast)}{P_0^\ast/(1-P_0^\ast)} <\frac{P_1/(1-P_1)}{P_0/(1-P_0)}, \label{e:inequality}
\end{align}
then
\begin{align}
\frac{P_1s+1}{s+1-P_1s}\frac{s+1-P_0s}{P_0s+1} &< \frac{P_1}{1-P_1}\frac{1-P_0}{P_0},\nonumber
\\
\frac{P_1s+1}{s+1-P_1s}\frac{1-P_1}{P_1} &< \frac{P_0s+1}{s+1-P_0s}\frac{1-P_0}{P_0}.\nonumber
\end{align}
Now, since
\begin{align*}
\frac{\partial}{\partial P}\bigg\{\frac{Ps+1}{s+1-Ps}\frac{1-P}{P}\bigg\} = \frac{(-2P^2+2P-1)S-1}{P^2(1-(P-1)S)^2} < 0
\end{align*}
over the interval $(0,1)$ for $P$, it follows that inequality \eqref{e:inequality} holds if $P_1>P_0$. Also, if $P_1>P_0$, then, since $\partial/(\partial P)\{ (Ps+1)/(s+1-Ps)\}>0$ if $P\in(0,1)$, we have $$1<\frac{P_1^\ast/(1-P_1^\ast)}{P_0^\ast/(1-P_0^\ast)}.$$
Similar arguments establish the assertion for the case where $P_1<P_0$. It is easily verified that if $P_1=P_0$, then
\begin{align*}
\frac{P_1^\ast/(1-P_1^\ast)}{P_0^\ast/(1-P_0^\ast)}&=\frac{P_1s+1}{s+1-P_1s}\frac{s+1-P_0s}{P_0s+1}=1 \\&=\frac{P_1}{1-P_1}\frac{1-P_0}{P_0}=\frac{P_1/(1-P_1)}{P_0/(1-P_0)},
\end{align*}
as desired.
\end{proof}
\section{III}\label{s:app_Rcode}
GP and IPWM were applied to every dataset \texttt{data} in R using the function \texttt{mecor::ipwm} and the following code:\\\par
\noindent\texttt{%
\# GP: \\
formulasGP <- list(\\
\indent Y{\textasciitilde}Z+B+L1+L2+L3+L4+L5+L6+L7+L8+L9+L10,\\
\indent B{\textasciitilde}Z+L1+L2+L3+L4+L5+L6+L7+L8+L9+L10,\\
\indent Z{\textasciitilde}L1+L2+L3+L4+L5+L6+L7+L8+L9+L10\\
)\\
mecor::ipwm(\\ \indent
formulas=formulasGP, data=data, outcome\_true=``Y'', \\
\indent outcome\_mis=``Z'', exposure\_true=``B'', exposure\_mis=NULL, sp=1e6\\
)\\
\ \\
\# IPWM: \\
formulasIPWM <- list(\\ \indent Y{\textasciitilde}A+Z+B+L1+L2+L3+L4+L5+L6+L7+L8+L9+L10,\\
\indent A{\textasciitilde}Z+B+L1+L2+L3+L4+L5+L6+L7+L8+L9+L10,\\
\indent Z{\textasciitilde}B+L1+L2+L3+L4+L5+L6+L7+L8+L9+L10,\\
\indent B{\textasciitilde}L1+L2+L3+L4+L5+L6+L7+L8+L9+L10\\
)\\
mecor::ipwm(\\ \indent
formulas=formulasIPWM, data=data, outcome\_true=``Y'', \\
\indent outcome\_mis=``Z'', exposure\_true=``A'', exposure\_mis=``B'', sp=1e6\\
)
}\\\par\noindent

\label{lastpage}
\end{document}



\pagerange{\pageref{firstpage}--\pageref{lastpage}} 


\label{firstpage}



\maketitle

\begin{sidewaystable}
\centering
\caption{Log-likelihood contributions for all possible types of observations under internal validation sampling.\break }
\begin{tabular*}{\textwidth}{
@{\extracolsep{\fill}}cccccccccl@{\extracolsep{\fill}}
}
\toprule
Type&$R_Y$&$R_A$&$Z$&$B$&$Y$&$A$&$L$&Count&Log-likelihood contribution
\\\midrule
%
1&0&0&0&0& & &0&$m_{1}$&$\log(1-\varepsilon^\ast_{00})+\log(1-\delta^\ast_{0})$\\
2&0&0&1&0& & &0&$m_{2}$&$\log(\varepsilon^\ast_{00})+\log(1-\delta^\ast_{0})$\\
3&0&0&0&1& & &0&$m_{3}$&$\log(1-\varepsilon^\ast_{10})+\log(\delta^\ast_{0})$\\
4&0&0&1&1& & &0&$m_{4}$&$\log(\varepsilon^\ast_{10})+\log(\delta^\ast_{0})$\\
5&0&0&0&0& & &1&$m_{5}$&$\log(1-\varepsilon^\ast_{01})+\log(1-\delta^\ast_{1})$\\
6&0&0&1&0& & &1&$m_{6}$&$\log(\varepsilon^\ast_{01})+\log(1-\delta^\ast_{1})$\\
7&0&0&0&1& & &1&$m_{7}$&$\log(1-\varepsilon^\ast_{11})+\log(\delta^\ast_{1})$\\
8&0&0&1&1& & &1&$m_{8}$&$\log(\varepsilon^\ast_{11})+\log(\delta^\ast_{1})$\\
9&0&1& & & & & &$m_{9}+...+m_{12}=0$&0\\
10&1&0& & & & & &$m_{13}+...+m_{16}=0$&0\\
11&1&1&0&0&0&0&0&$m_{17}$&$\log(1-\pi^\ast_{0000})+\log(1-\lambda^\ast_{000})+\log(1-\varepsilon^\ast_{00})+\log(1-\delta^\ast_{0})$\\
12&1&1&1&0&0&0&0&$m_{18}$&$\log(1-\pi^\ast_{0100})+\log(1-\lambda^\ast_{100})+\log(\varepsilon^\ast_{00})+\log(1-\delta^\ast_{0})$\\
13&1&1&0&1&0&0&0&$m_{19}$&$\log(1-\pi^\ast_{0010})+\log(1-\lambda^\ast_{010})+\log(1-\varepsilon^\ast_{10})+\log(\delta^\ast_{0})$\\
14&1&1&1&1&0&0&0&$m_{20}$&$\log(1-\pi^\ast_{0110})+\log(1-\lambda^\ast_{110})+\log(\varepsilon^\ast_{10})+\log(\delta^\ast_{0})$\\
15&1&1&0&0&1&0&0&$m_{21}$&$\log(\pi^\ast_{0000})+\log(1-\lambda^\ast_{000})+\log(1-\varepsilon^\ast_{00})+\log(1-\delta^\ast_{0})$\\
16&1&1&1&0&1&0&0&$m_{22}$&$\log(\pi^\ast_{0100})+\log(1-\lambda^\ast_{100})+\log(\varepsilon^\ast_{00})+\log(1-\delta^\ast_{0})$\\
17&1&1&0&1&1&0&0&$m_{23}$&$\log(\pi^\ast_{0010})+\log(1-\lambda^\ast_{010})+\log(1-\varepsilon^\ast_{10})+\log(\delta^\ast_{0})$\\
18&1&1&1&1&1&0&0&$m_{24}$&$\log(\pi^\ast_{0110})+\log(1-\lambda^\ast_{110})+\log(\varepsilon^\ast_{10})+\log(\delta^\ast_{0})$\\
19&1&1&0&0&0&1&0&$m_{25}$&$\log(1-\pi^\ast_{1000})+\log(\lambda^\ast_{000})+\log(1-\varepsilon^\ast_{00})+\log(1-\delta^\ast_{0})$\\
20&1&1&1&0&0&1&0&$m_{26}$&$\log(1-\pi^\ast_{1100})+\log(\lambda^\ast_{100})+\log(\varepsilon^\ast_{00})+\log(1-\delta^\ast_{0})$\\
21&1&1&0&1&0&1&0&$m_{27}$&$\log(1-\pi^\ast_{1010})+\log(\lambda^\ast_{010})+\log(1-\varepsilon^\ast_{10})+\log(\delta^\ast_{0})$\\
%
22&1&1&1&1&0&1&0&$m_{28}$&$\log(1-\pi^\ast_{1110})+\log(\lambda^\ast_{110})+\log(\varepsilon^\ast_{10})+\log(\delta^\ast_{0})$\\
23&1&1&0&0&1&1&0&$m_{29}$&$\log(\pi^\ast_{1000})+\log(\lambda^\ast_{000})+\log(1-\varepsilon^\ast_{00})+\log(1-\delta^\ast_{0})$\\
24&1&1&1&0&1&1&0&$m_{30}$&$\log(\pi^\ast_{1100})+\log(\lambda^\ast_{100})+\log(\varepsilon^\ast_{00})+\log(1-\delta^\ast_{0})$\\
25&1&1&0&1&1&1&0&$m_{31}$&$\log(\pi^\ast_{1010})+\log(\lambda^\ast_{010})+\log(1-\varepsilon^\ast_{10})+\log(\delta^\ast_{0})$\\
26&1&1&1&1&1&1&0&$m_{32}$&$\log(\pi^\ast_{1110})+\log(\lambda^\ast_{110})+\log(\varepsilon^\ast_{10})+\log(\delta^\ast_{0})$\\
27&1&1&0&0&0&0&1&$m_{33}$&$\log(1-\pi^\ast_{0001})+\log(1-\lambda^\ast_{001})+\log(1-\varepsilon^\ast_{01})+\log(1-\delta^\ast_{1})$\\
28&1&1&1&0&0&0&1&$m_{34}$&$\log(1-\pi^\ast_{0101})+\log(1-\lambda^\ast_{101})+\log(\varepsilon^\ast_{01})+\log(1-\delta^\ast_{1})$\\
29&1&1&0&1&0&0&1&$m_{35}$&$\log(1-\pi^\ast_{0011})+\log(1-\lambda^\ast_{011})+\log(1-\varepsilon^\ast_{11})+\log(\delta^\ast_{1})$\\
30&1&1&1&1&0&0&1&$m_{36}$&$\log(1-\pi^\ast_{0111})+\log(1-\lambda^\ast_{111})+\log(\varepsilon^\ast_{11})+\log(\delta^\ast_{1})$\\
\end{tabular*}
\end{sidewaystable}

\begin{sidewaystable}
\centering
\contcaption{(continued.)\\~}
\begin{tabular*}{\textwidth}{
@{\extracolsep{\fill}}cccccccccl@{\extracolsep{\fill}}
}
\toprule
Type&$R_Y$&$R_A$&$Z$&$B$&$Y$&$A$&$L$&Count&Log-likelihood contribution
\\\midrule
%
31&1&1&0&0&1&0&1&$m_{37}$&$\log(\pi^\ast_{0001})+\log(1-\lambda^\ast_{001})+\log(1-\varepsilon^\ast_{01})+\log(1-\delta^\ast_{1})$\\
32&1&1&1&0&1&0&1&$m_{38}$&$\log(\pi^\ast_{0101})+\log(1-\lambda^\ast_{101})+\log(\varepsilon^\ast_{01})+\log(1-\delta^\ast_{1})$\\
33&1&1&0&1&1&0&1&$m_{39}$&$\log(\pi^\ast_{0011})+\log(1-\lambda^\ast_{011})+\log(1-\varepsilon^\ast_{11})+\log(\delta^\ast_{1})$\\
34&1&1&1&1&1&0&1&$m_{40}$&$\log(\pi^\ast_{0111})+\log(1-\lambda^\ast_{111})+\log(\varepsilon^\ast_{11})+\log(\delta^\ast_{1})$\\
35&1&1&0&0&0&1&1&$m_{41}$&$\log(1-\pi^\ast_{1001})+\log(\lambda^\ast_{001})+\log(1-\varepsilon^\ast_{01})+\log(1-\delta^\ast_{1})$\\
36&1&1&1&0&0&1&1&$m_{42}$&$\log(1-\pi^\ast_{1101})+\log(\lambda^\ast_{101})+\log(\varepsilon^\ast_{01})+\log(1-\delta^\ast_{1})$\\
37&1&1&0&1&0&1&1&$m_{43}$&$\log(1-\pi^\ast_{1011})+\log(\lambda^\ast_{011})+\log(1-\varepsilon^\ast_{11})+\log(\delta^\ast_{1})$\\
38&1&1&1&1&0&1&1&$m_{44}$&$\log(1-\pi^\ast_{1111})+\log(\lambda^\ast_{111})+\log(\varepsilon^\ast_{11})+\log(\delta^\ast_{1})$\\
39&1&1&0&0&1&1&1&$m_{45}$&$\log(\pi^\ast_{1001})+\log(\lambda^\ast_{001})+\log(1-\varepsilon^\ast_{01})+\log(1-\delta^\ast_{1})$\\
40&1&1&1&0&1&1&1&$m_{46}$&$\log(\pi^\ast_{1101})+\log(\lambda^\ast_{101})+\log(\varepsilon^\ast_{01})+\log(1-\delta^\ast_{1})$\\
41&1&1&0&1&1&1&1&$m_{47}$&$\log(\pi^\ast_{1011})+\log(\lambda^\ast_{011})+\log(1-\varepsilon^\ast_{11})+\log(\delta^\ast_{1})$\\
42&1&1&1&1&1&1&1&$m_{48}$&$\log(\pi^\ast_{1111})+\log(\lambda^\ast_{111})+\log(\varepsilon^\ast_{11})+\log(\delta^\ast_{1})$\\
%
\bottomrule
\end{tabular*}
\end{sidewaystable}

\begin{sidewaystable}
\centering
\caption{Closed form expressions for the maximum likelihood estimators (MLE) of the parameters of the likelihood parameterised in terms of predictive values for the hypothetical study setting.\break }
\begin{tabular*}{\textwidth}{
@{\extracolsep{\fill}}l@{\extracolsep{1cm}}l@{\extracolsep{\fill}}
}
\toprule
Parameter&MLE\\
\midrule
$\delta^\ast_0$ & $\hat{\delta}^\ast_0$ = $(m_{3}+m_{4}+m_{19}+m_{20}+m_{23}+m_{24}+m_{27}+m_{28}+m_{31}+m_{32})/(\{\sum_{j=1}^4m_j\}+\{\sum_{j=17}^{32}m_j\})$ \\
$\delta^\ast_1$ & $\hat{\delta}^\ast_1$ = $(m_{7}+m_{8}+m_{35}+m_{36}+m_{39}+m_{40}+m_{43}+m_{44}+m_{47}+m_{48})/(\{\sum_{j=5}^8m_j\}+\{\sum_{j=33}^{48}m_j\})$ \\
$\varepsilon^\ast_{00}$ & $\hat{\varepsilon}^\ast_{00}=(m_2+m_{18}+m_{22}+m_{26}+m_{30})/(m_1+m_2+m_{17}+m_{18}+m_{21}+m_{22}+m_{25}+m_{26}+m_{29}+m_{30})$\\
$\varepsilon^\ast_{10}$ & $\hat{\varepsilon}^\ast_{10}=(m_4+m_{20}+m_{24}+m_{28}+m_{32})/(m_3+m_4+m_{19}+m_{20}+m_{23}+m_{24}+m_{27}+m_{28}+m_{31}+m_{32})$\\
$\varepsilon^\ast_{01}$ & $\hat{\varepsilon}^\ast_{01}=(m_6+m_{34}+m_{38}+m_{42}+m_{46})/(m_5+m_6+m_{33}+m_{34}+m_{37}+m_{38}+m_{41}+m_{42}+m_{45}+m_{46})$\\
$\varepsilon^\ast_{11}$ & $\hat{\varepsilon}^\ast_{11}=(m_8+m_{36}+m_{40}+m_{44}+m_{48})/(m_7+m_8+m_{35}+m_{36}+m_{39}+m_{40}+m_{43}+m_{44}+m_{47}+m_{48})$\\
$\lambda^\ast_{000}$ & $\hat{\lambda}^\ast_{000}=(m_{25}+m_{29})/(m_{17}+m_{21}+m_{25}+m_{29})$\\
$\lambda^\ast_{100}$ & $\hat{\lambda}^\ast_{100}=(m_{26}+m_{30})/(m_{18}+m_{22}+m_{26}+m_{30})$\\
$\lambda^\ast_{010}$ & $\hat{\lambda}^\ast_{010}=(m_{27}+m_{31})/(m_{19}+m_{23}+m_{27}+m_{31})$\\
$\lambda^\ast_{110}$ & $\hat{\lambda}^\ast_{110}=(m_{28}+m_{32})/(m_{20}+m_{24}+m_{28}+m_{32})$\\
$\lambda^\ast_{001}$ & $\hat{\lambda}^\ast_{001}=(m_{41}+m_{45})/(m_{33}+m_{37}+m_{41}+m_{45})$\\
$\lambda^\ast_{101}$ & $\hat{\lambda}^\ast_{101}=(m_{42}+m_{46})/(m_{34}+m_{38}+m_{42}+m_{46})$\\
$\lambda^\ast_{011}$ & $\hat{\lambda}^\ast_{011}=(m_{43}+m_{47})/(m_{35}+m_{39}+m_{43}+m_{47})$\\
$\lambda^\ast_{111}$ & $\hat{\lambda}^\ast_{111}=(m_{44}+m_{48})/(m_{36}+m_{40}+m_{44}+m_{48})$\\
$\pi^\ast_{0000}$ & $\hat{\pi}^\ast_{0000}=m_{21}/(m_{17}+m_{21})$\\
$\pi^\ast_{1000}$ & $\hat{\pi}^\ast_{1000}=m_{29}/(m_{25}+m_{29})$\\
$\pi^\ast_{0100}$ & $\hat{\pi}^\ast_{0100}=m_{22}/(m_{18}+m_{22})$\\
$\pi^\ast_{1100}$ & $\hat{\pi}^\ast_{1100}=m_{30}/(m_{26}+m_{30})$\\
$\pi^\ast_{0010}$ & $\hat{\pi}^\ast_{0010}=m_{23}/(m_{19}+m_{23})$\\
$\pi^\ast_{1010}$ & $\hat{\pi}^\ast_{1010}=m_{31}/(m_{27}+m_{31})$\\
$\pi^\ast_{0110}$ & $\hat{\pi}^\ast_{0110}=m_{24}/(m_{20}+m_{24})$\\
$\pi^\ast_{1110}$ & $\hat{\pi}^\ast_{1110}=m_{32}/(m_{28}+m_{32})$\\
$\pi^\ast_{0001}$ & $\hat{\pi}^\ast_{0001}=m_{37}/(m_{33}+m_{37})$\\
$\pi^\ast_{1001}$ & $\hat{\pi}^\ast_{1001}=m_{45}/(m_{41}+m_{45})$\\
$\pi^\ast_{0101}$ & $\hat{\pi}^\ast_{0101}=m_{38}/(m_{34}+m_{38})$\\
$\pi^\ast_{1101}$ & $\hat{\pi}^\ast_{1101}=m_{46}/(m_{42}+m_{46})$\\
$\pi^\ast_{0011}$ & $\hat{\pi}^\ast_{0011}=m_{39}/(m_{35}+m_{39})$\\
$\pi^\ast_{1011}$ & $\hat{\pi}^\ast_{1011}=m_{47}/(m_{43}+m_{47})$\\
$\pi^\ast_{0111}$ & $\hat{\pi}^\ast_{0111}=m_{40}/(m_{36}+m_{40})$\\
$\pi^\ast_{1111}$ & $\hat{\pi}^\ast_{1111}=m_{48}/(m_{44}+m_{48})$\\
\bottomrule
\end{tabular*}
\end{sidewaystable}

\begin{table}
\centering
\caption{Results for simulation studies 1-36 on the performance of different causal estimators in various scenarios of confounding and misclassification in exposure and outcome. Abbreviations: PS, propensity score method ignoring misclassification; CCA, complete case analysis; GP, Gravel and Platt estimator ignoring exposure misclassification; IPWM, inverse probability weighting method for confounding and joint exposure and outcome misclassification; BSE, estimated standard error for the bias due to Monte Carlo error; SE, empirical standard error; SSE, sample standard error; CP, empirical coverage probability. In all scenarios, the true marginal log OR (estimand) was $-0.4$.\break
}
\begin{tabular*}{\textwidth}{
@{\extracolsep{\fill}}lcccccclcccccc@{\extracolsep{\fill}}
}
\toprule
&\multicolumn{6}{c}{Crude}&&\multicolumn{6}{c}{PS}\\\cline{2-7}\cline{9-14}
Scenario&Bias&BSE&MSE&SE&SSE&CP&&Bias&BSE&MSE&SE&SSE&CP
\\\midrule
1 & $\phantom{-}0.394$ & $0.004$ & $0.275$ & $0.119$ & $0.118$ & $0.122$ && $\phantom{-}0.392$ & $0.005$ & $0.321$ & $0.168$ & $0.169$ & $0.382$ \\
2 & $\phantom{-}0.382$ & $0.006$ & $0.329$ & $0.183$ & $0.184$ & $0.492$ && $\phantom{-}0.379$ & $0.008$ & $0.407$ & $0.264$ & $0.258$ & $0.738$ \\
3 & $\phantom{-}0.394$ & $0.004$ & $0.272$ & $0.117$ & $0.118$ & $0.116$ && $\phantom{-}0.389$ & $0.006$ & $0.327$ & $0.175$ & $0.169$ & $0.402$ \\
4 & $\phantom{-}0.401$ & $0.004$ & $0.278$ & $0.117$ & $0.118$ & $0.102$ && $\phantom{-}0.389$ & $0.006$ & $0.327$ & $0.176$ & $0.168$ & $0.392$ \\
5 & $\phantom{-}0.401$ & $0.003$ & $0.252$ & $0.090$ & $0.088$ & $0.007$ && $\phantom{-}0.402$ & $0.003$ & $0.252$ & $0.090$ & $0.088$ & $0.010$ \\
6 & $\phantom{-}0.407$ & $0.004$ & $0.298$ & $0.132$ & $0.134$ & $0.133$ && $\phantom{-}0.407$ & $0.004$ & $0.297$ & $0.131$ & $0.135$ & $0.136$ \\
7 & $\phantom{-}0.396$ & $0.003$ & $0.243$ & $0.086$ & $0.088$ & $0.009$ && $\phantom{-}0.396$ & $0.003$ & $0.243$ & $0.086$ & $0.088$ & $0.009$ \\
8 & $\phantom{-}0.395$ & $0.003$ & $0.242$ & $0.086$ & $0.088$ & $0.005$ && $\phantom{-}0.395$ & $0.003$ & $0.242$ & $0.086$ & $0.088$ & $0.004$ \\
9 & $\phantom{-}0.398$ & $0.003$ & $0.247$ & $0.089$ & $0.088$ & $0.005$ && $\phantom{-}0.398$ & $0.003$ & $0.247$ & $0.089$ & $0.088$ & $0.005$ \\
10 & $\phantom{-}0.401$ & $0.003$ & $0.242$ & $0.081$ & $0.083$ & $0.004$ && $\phantom{-}0.399$ & $0.004$ & $0.276$ & $0.117$ & $0.120$ & $0.080$ \\
11 & $\phantom{-}0.392$ & $0.004$ & $0.281$ & $0.127$ & $0.127$ & $0.189$ && $\phantom{-}0.391$ & $0.006$ & $0.330$ & $0.177$ & $0.181$ & $0.436$ \\
12 & $\phantom{-}0.400$ & $0.003$ & $0.243$ & $0.083$ & $0.083$ & $0.005$ && $\phantom{-}0.391$ & $0.004$ & $0.273$ & $0.119$ & $0.119$ & $0.104$ \\
13 & $\phantom{-}0.394$ & $0.003$ & $0.237$ & $0.081$ & $0.083$ & $0.007$ && $\phantom{-}0.392$ & $0.004$ & $0.276$ & $0.122$ & $0.120$ & $0.106$ \\
14 & $\phantom{-}0.398$ & $0.002$ & $0.219$ & $0.061$ & $0.062$ & $0.000$ && $\phantom{-}0.398$ & $0.002$ & $0.219$ & $0.061$ & $0.062$ & $0.000$ \\
15 & $\phantom{-}0.404$ & $0.003$ & $0.257$ & $0.094$ & $0.094$ & $0.010$ && $\phantom{-}0.404$ & $0.003$ & $0.257$ & $0.094$ & $0.095$ & $0.011$ \\
16 & $\phantom{-}0.399$ & $0.002$ & $0.222$ & $0.062$ & $0.062$ & $0.000$ && $\phantom{-}0.399$ & $0.002$ & $0.222$ & $0.062$ & $0.062$ & $0.000$ \\
17 & $\phantom{-}0.401$ & $0.002$ & $0.225$ & $0.064$ & $0.062$ & $0.000$ && $\phantom{-}0.401$ & $0.002$ & $0.225$ & $0.064$ & $0.062$ & $0.000$ \\
18 & $\phantom{-}0.400$ & $0.002$ & $0.224$ & $0.064$ & $0.062$ & $0.000$ && $\phantom{-}0.400$ & $0.002$ & $0.224$ & $0.064$ & $0.062$ & $0.000$ \\
19 & $\phantom{-}0.397$ & $0.004$ & $0.274$ & $0.117$ & $0.118$ & $0.100$ && $\phantom{-}0.396$ & $0.005$ & $0.324$ & $0.167$ & $0.170$ & $0.387$ \\
20 & $\phantom{-}0.391$ & $0.006$ & $0.332$ & $0.179$ & $0.183$ & $0.466$ && $\phantom{-}0.362$ & $0.008$ & $0.392$ & $0.261$ & $0.253$ & $0.732$ \\
21 & $\phantom{-}0.401$ & $0.004$ & $0.278$ & $0.118$ & $0.118$ & $0.109$ && $\phantom{-}0.391$ & $0.005$ & $0.325$ & $0.173$ & $0.169$ & $0.394$ \\
22 & $\phantom{-}0.404$ & $0.004$ & $0.274$ & $0.111$ & $0.117$ & $0.080$ && $\phantom{-}0.396$ & $0.005$ & $0.326$ & $0.169$ & $0.167$ & $0.367$ \\
23 & $\phantom{-}0.400$ & $0.003$ & $0.247$ & $0.087$ & $0.088$ & $0.008$ && $\phantom{-}0.400$ & $0.003$ & $0.247$ & $0.087$ & $0.088$ & $0.006$ \\
24 & $\phantom{-}0.397$ & $0.004$ & $0.292$ & $0.135$ & $0.134$ & $0.161$ && $\phantom{-}0.397$ & $0.004$ & $0.292$ & $0.135$ & $0.135$ & $0.161$ \\
25 & $\phantom{-}0.401$ & $0.003$ & $0.248$ & $0.087$ & $0.088$ & $0.006$ && $\phantom{-}0.400$ & $0.003$ & $0.248$ & $0.087$ & $0.088$ & $0.006$ \\
26 & $\phantom{-}0.403$ & $0.003$ & $0.249$ & $0.087$ & $0.088$ & $0.003$ && $\phantom{-}0.403$ & $0.003$ & $0.249$ & $0.087$ & $0.088$ & $0.004$ \\
27 & $\phantom{-}0.400$ & $0.003$ & $0.247$ & $0.087$ & $0.088$ & $0.004$ && $\phantom{-}0.400$ & $0.003$ & $0.248$ & $0.088$ & $0.088$ & $0.003$ \\
28 & $\phantom{-}0.396$ & $0.003$ & $0.243$ & $0.085$ & $0.083$ & $0.004$ && $\phantom{-}0.395$ & $0.004$ & $0.279$ & $0.123$ & $0.119$ & $0.101$ \\
29 & $\phantom{-}0.396$ & $0.004$ & $0.284$ & $0.128$ & $0.127$ & $0.176$ && $\phantom{-}0.388$ & $0.006$ & $0.337$ & $0.187$ & $0.182$ & $0.455$ \\
30 & $\phantom{-}0.398$ & $0.003$ & $0.239$ & $0.081$ & $0.083$ & $0.007$ && $\phantom{-}0.398$ & $0.004$ & $0.279$ & $0.120$ & $0.120$ & $0.096$ \\
31 & $\phantom{-}0.399$ & $0.003$ & $0.243$ & $0.083$ & $0.083$ & $0.004$ && $\phantom{-}0.395$ & $0.004$ & $0.276$ & $0.120$ & $0.119$ & $0.102$ \\
32 & $\phantom{-}0.404$ & $0.002$ & $0.224$ & $0.061$ & $0.062$ & $0.000$ && $\phantom{-}0.404$ & $0.002$ & $0.224$ & $0.061$ & $0.062$ & $0.000$ \\
33 & $\phantom{-}0.398$ & $0.003$ & $0.251$ & $0.092$ & $0.094$ & $0.011$ && $\phantom{-}0.398$ & $0.003$ & $0.251$ & $0.092$ & $0.095$ & $0.012$ \\
34 & $\phantom{-}0.404$ & $0.002$ & $0.226$ & $0.063$ & $0.062$ & $0.000$ && $\phantom{-}0.404$ & $0.002$ & $0.226$ & $0.063$ & $0.062$ & $0.000$ \\
35 & $\phantom{-}0.399$ & $0.002$ & $0.221$ & $0.061$ & $0.062$ & $0.000$ && $\phantom{-}0.399$ & $0.002$ & $0.221$ & $0.061$ & $0.062$ & $0.000$ \\
36 & $\phantom{-}0.401$ & $0.002$ & $0.220$ & $0.059$ & $0.062$ & $0.000$ && $\phantom{-}0.401$ & $0.002$ & $0.220$ & $0.059$ & $0.062$ & $0.000$ \\
\midrule
\end{tabular*}
\end{table}

\begin{table}
\centering
\contcaption{(Continued.)\break
}
\begin{tabular*}{\textwidth}{
@{\extracolsep{\fill}}lcccccclcccccc@{\extracolsep{\fill}}
}
\toprule
&\multicolumn{6}{c}{CCA}&&\multicolumn{6}{c}{GP}\\\cline{2-7}\cline{9-14}
Scenario&Bias&BSE&MSE&SE&SSE&CP&&Bias&BSE&MSE&SE&SSE&CP
\\\midrule
1 & $-0.078$ & $0.015$ & $0.476$ & $0.469$ & $0.491$ & $0.899$ && $-0.036$ & $0.011$ & $0.360$ & $0.359$ & $0.428$ & $0.958$ \\
2 & $-0.117$ & $0.019$ & $0.615$ & $0.601$ & $0.900$ & $0.887$ && $-0.097$ & $0.016$ & $0.515$ & $0.505$ & $0.861$ & $0.938$ \\
3 & $-0.020$ & $0.010$ & $0.301$ & $0.301$ & $0.300$ & $0.919$ && $-0.019$ & $0.007$ & $0.234$ & $0.233$ & $0.240$ & $0.939$ \\
4 & $-0.093$ & $0.020$ & $0.640$ & $0.631$ & $1.158$ & $0.899$ && $-0.045$ & $0.016$ & $0.503$ & $0.501$ & $1.087$ & $0.944$ \\
5 & $-0.145$ & $0.009$ & $0.307$ & $0.286$ & $0.286$ & $0.903$ && $\phantom{-}0.269$ & $0.008$ & $0.316$ & $0.244$ & $0.244$ & $0.799$ \\
6 & $-0.109$ & $0.011$ & $0.357$ & $0.345$ & $0.362$ & $0.930$ && $\phantom{-}0.280$ & $0.010$ & $0.392$ & $0.314$ & $0.339$ & $0.862$ \\
7 & $-0.213$ & $0.007$ & $0.282$ & $0.237$ & $0.250$ & $0.865$ && $\phantom{-}0.134$ & $0.008$ & $0.259$ & $0.241$ & $0.252$ & $0.926$ \\
8 & $-0.209$ & $0.006$ & $0.231$ & $0.187$ & $0.186$ & $0.775$ && $\phantom{-}0.259$ & $0.004$ & $0.207$ & $0.140$ & $0.144$ & $0.570$ \\
9 & $-0.175$ & $0.012$ & $0.422$ & $0.392$ & $0.411$ & $0.902$ && $\phantom{-}0.263$ & $0.010$ & $0.394$ & $0.325$ & $0.339$ & $0.883$ \\
10 & $-0.011$ & $0.010$ & $0.302$ & $0.302$ & $0.315$ & $0.932$ && $-0.016$ & $0.008$ & $0.242$ & $0.242$ & $0.257$ & $0.960$ \\
11 & $-0.038$ & $0.013$ & $0.406$ & $0.404$ & $0.398$ & $0.909$ && $-0.024$ & $0.010$ & $0.328$ & $0.328$ & $0.353$ & $0.956$ \\
12 & $\phantom{-}0.004$ & $0.007$ & $0.210$ & $0.210$ & $0.208$ & $0.939$ && $-0.013$ & $0.005$ & $0.167$ & $0.167$ & $0.165$ & $0.943$ \\
13 & $-0.050$ & $0.014$ & $0.435$ & $0.432$ & $0.441$ & $0.905$ && $-0.022$ & $0.011$ & $0.341$ & $0.341$ & $0.371$ & $0.944$ \\
14 & $-0.128$ & $0.006$ & $0.211$ & $0.194$ & $0.199$ & $0.890$ && $\phantom{-}0.271$ & $0.005$ & $0.236$ & $0.163$ & $0.168$ & $0.633$ \\
15 & $-0.097$ & $0.007$ & $0.246$ & $0.237$ & $0.245$ & $0.926$ && $\phantom{-}0.269$ & $0.007$ & $0.293$ & $0.221$ & $0.225$ & $0.772$ \\
16 & $-0.232$ & $0.005$ & $0.221$ & $0.168$ & $0.173$ & $0.736$ && $\phantom{-}0.118$ & $0.005$ & $0.185$ & $0.171$ & $0.173$ & $0.904$ \\
17 & $-0.197$ & $0.004$ & $0.169$ & $0.130$ & $0.130$ & $0.646$ && $\phantom{-}0.263$ & $0.003$ & $0.168$ & $0.098$ & $0.101$ & $0.261$ \\
18 & $-0.173$ & $0.008$ & $0.296$ & $0.266$ & $0.270$ & $0.883$ && $\phantom{-}0.257$ & $0.007$ & $0.290$ & $0.224$ & $0.229$ & $0.795$ \\
19 & $\phantom{-}0.011$ & $0.007$ & $0.237$ & $0.237$ & $0.244$ & $0.957$ && $-0.002$ & $0.007$ & $0.223$ & $0.223$ & $0.221$ & $0.939$ \\
20 & $\phantom{-}0.001$ & $0.009$ & $0.288$ & $0.288$ & $0.276$ & $0.918$ && $-0.019$ & $0.010$ & $0.305$ & $0.304$ & $0.304$ & $0.938$ \\
21 & $\phantom{-}0.058$ & $0.007$ & $0.219$ & $0.216$ & $0.223$ & $0.953$ && $-0.007$ & $0.006$ & $0.194$ & $0.194$ & $0.197$ & $0.949$ \\
22 & $-0.015$ & $0.011$ & $0.350$ & $0.350$ & $0.345$ & $0.934$ && $-0.023$ & $0.009$ & $0.277$ & $0.277$ & $0.287$ & $0.950$ \\
23 & $-0.092$ & $0.005$ & $0.154$ & $0.146$ & $0.148$ & $0.889$ && $\phantom{-}0.263$ & $0.004$ & $0.205$ & $0.136$ & $0.136$ & $0.505$ \\
24 & $-0.060$ & $0.005$ & $0.174$ & $0.170$ & $0.170$ & $0.929$ && $\phantom{-}0.263$ & $0.006$ & $0.247$ & $0.177$ & $0.183$ & $0.712$ \\
25 & $-0.171$ & $0.004$ & $0.161$ & $0.132$ & $0.131$ & $0.741$ && $\phantom{-}0.139$ & $0.004$ & $0.155$ & $0.136$ & $0.138$ & $0.820$ \\
26 & $-0.121$ & $0.004$ & $0.148$ & $0.134$ & $0.135$ & $0.842$ && $\phantom{-}0.263$ & $0.004$ & $0.184$ & $0.115$ & $0.118$ & $0.388$ \\
27 & $-0.113$ & $0.007$ & $0.219$ & $0.206$ & $0.207$ & $0.904$ && $\phantom{-}0.264$ & $0.006$ & $0.248$ & $0.178$ & $0.185$ & $0.702$ \\
28 & $\phantom{-}0.017$ & $0.005$ & $0.169$ & $0.169$ & $0.170$ & $0.953$ && $\phantom{-}0.003$ & $0.005$ & $0.147$ & $0.147$ & $0.152$ & $0.946$ \\
29 & $\phantom{-}0.007$ & $0.006$ & $0.200$ & $0.200$ & $0.193$ & $0.947$ && $-0.014$ & $0.006$ & $0.196$ & $0.196$ & $0.203$ & $0.952$ \\
30 & $\phantom{-}0.058$ & $0.005$ & $0.161$ & $0.157$ & $0.154$ & $0.928$ && $-0.003$ & $0.004$ & $0.138$ & $0.138$ & $0.136$ & $0.943$ \\
31 & $\phantom{-}0.018$ & $0.007$ & $0.236$ & $0.236$ & $0.237$ & $0.946$ && $-0.003$ & $0.006$ & $0.193$ & $0.192$ & $0.194$ & $0.940$ \\
32 & $-0.092$ & $0.003$ & $0.107$ & $0.099$ & $0.105$ & $0.864$ && $\phantom{-}0.265$ & $0.003$ & $0.161$ & $0.091$ & $0.095$ & $0.191$ \\
33 & $-0.051$ & $0.004$ & $0.118$ & $0.115$ & $0.119$ & $0.933$ && $\phantom{-}0.264$ & $0.004$ & $0.191$ & $0.121$ & $0.124$ & $0.421$ \\
34 & $-0.166$ & $0.003$ & $0.121$ & $0.094$ & $0.092$ & $0.559$ && $\phantom{-}0.138$ & $0.003$ & $0.115$ & $0.096$ & $0.096$ & $0.710$ \\
35 & $-0.116$ & $0.003$ & $0.108$ & $0.095$ & $0.094$ & $0.762$ && $\phantom{-}0.264$ & $0.003$ & $0.150$ & $0.080$ & $0.082$ & $0.110$ \\
36 & $-0.115$ & $0.005$ & $0.163$ & $0.149$ & $0.144$ & $0.859$ && $\phantom{-}0.266$ & $0.004$ & $0.202$ & $0.131$ & $0.128$ & $0.455$ \\
\midrule
\end{tabular*}
\end{table}

\begin{table}
\centering
\contcaption{(Continued.)\break
}
\begin{tabular*}{\textwidth}{
@{\extracolsep{\fill}}lcccccc@{\extracolsep{8.2mm}}lcccccc@{\extracolsep{\fill}}
}
\toprule
&\multicolumn{6}{c}{IPWM}\\\cline{2-7}
Scenario&Bias&BSE&MSE&SE&SSE&CP&&&&&&&
\\\midrule
1 & $-0.036$ & $0.011$ & $0.360$ & $0.359$ & $0.428$ & $0.958$ \\
2 & $-0.097$ & $0.016$ & $0.515$ & $0.505$ & $0.861$ & $0.938$ \\
3 & $-0.019$ & $0.007$ & $0.234$ & $0.233$ & $0.240$ & $0.939$ \\
4 & $-0.045$ & $0.016$ & $0.503$ & $0.501$ & $1.087$ & $0.944$ \\
5 & $-0.017$ & $0.009$ & $0.286$ & $0.286$ & $0.284$ & $0.942$ \\
6 & $-0.014$ & $0.011$ & $0.359$ & $0.359$ & $0.386$ & $0.958$ \\
7 & $\phantom{-}0.004$ & $0.008$ & $0.243$ & $0.243$ & $0.261$ & $0.969$ \\
8 & $-0.004$ & $0.006$ & $0.180$ & $0.180$ & $0.181$ & $0.958$ \\
9 & $-0.025$ & $0.012$ & $0.375$ & $0.374$ & $0.415$ & $0.956$ \\
10 & $-0.016$ & $0.008$ & $0.242$ & $0.242$ & $0.257$ & $0.960$ \\
11 & $-0.024$ & $0.010$ & $0.328$ & $0.328$ & $0.353$ & $0.956$ \\
12 & $-0.013$ & $0.005$ & $0.167$ & $0.167$ & $0.165$ & $0.943$ \\
13 & $-0.022$ & $0.011$ & $0.341$ & $0.341$ & $0.371$ & $0.944$ \\
14 & $-0.004$ & $0.006$ & $0.186$ & $0.186$ & $0.194$ & $0.954$ \\
15 & $-0.010$ & $0.008$ & $0.244$ & $0.244$ & $0.252$ & $0.952$ \\
16 & $-0.013$ & $0.005$ & $0.172$ & $0.172$ & $0.179$ & $0.951$ \\
17 & $\phantom{-}0.003$ & $0.004$ & $0.122$ & $0.122$ & $0.125$ & $0.952$ \\
18 & $-0.019$ & $0.008$ & $0.255$ & $0.255$ & $0.265$ & $0.948$ \\
19 & $-0.002$ & $0.007$ & $0.223$ & $0.223$ & $0.221$ & $0.939$ \\
20 & $-0.019$ & $0.010$ & $0.305$ & $0.304$ & $0.304$ & $0.938$ \\
21 & $-0.007$ & $0.006$ & $0.194$ & $0.194$ & $0.197$ & $0.949$ \\
22 & $-0.023$ & $0.009$ & $0.277$ & $0.277$ & $0.287$ & $0.950$ \\
23 & $-0.003$ & $0.005$ & $0.147$ & $0.147$ & $0.152$ & $0.960$ \\
24 & $-0.006$ & $0.006$ & $0.187$ & $0.187$ & $0.198$ & $0.963$ \\
25 & $\phantom{-}0.010$ & $0.004$ & $0.142$ & $0.142$ & $0.143$ & $0.956$ \\
26 & $-0.003$ & $0.004$ & $0.131$ & $0.131$ & $0.136$ & $0.956$ \\
27 & $\phantom{-}0.010$ & $0.006$ & $0.205$ & $0.205$ & $0.207$ & $0.955$ \\
28 & $\phantom{-}0.003$ & $0.005$ & $0.147$ & $0.147$ & $0.152$ & $0.946$ \\
29 & $-0.014$ & $0.006$ & $0.196$ & $0.196$ & $0.203$ & $0.952$ \\
30 & $-0.003$ & $0.004$ & $0.138$ & $0.138$ & $0.136$ & $0.943$ \\
31 & $-0.003$ & $0.006$ & $0.193$ & $0.192$ & $0.194$ & $0.940$ \\
32 & $-0.005$ & $0.003$ & $0.104$ & $0.104$ & $0.106$ & $0.962$ \\
33 & $-0.001$ & $0.004$ & $0.129$ & $0.129$ & $0.134$ & $0.963$ \\
34 & $\phantom{-}0.010$ & $0.003$ & $0.099$ & $0.099$ & $0.099$ & $0.947$ \\
35 & $\phantom{-}0.001$ & $0.003$ & $0.096$ & $0.096$ & $0.095$ & $0.943$ \\
36 & $\phantom{-}0.001$ & $0.005$ & $0.148$ & $0.148$ & $0.144$ & $0.949$\\
\midrule
\end{tabular*}
\end{table}

\label{lastpage}